\documentclass[preprint,amsmath]{revtex4}
\usepackage{graphicx,epsfig,dsfont,amssymb,amsthm,amsfonts,amsbsy,mathrsfs,amscd,appendix}  
\usepackage{booktabs}
\usepackage{enumerate}
\usepackage[colorlinks,linkcolor=blue,anchorcolor=blue,citecolor=blue]{hyperref}
\newtheorem{thm}{Theorem}

\newtheorem{lem}{Lemma}

\newtheorem{proposition}{Proposition}

\newcounter{appendixsection}
\renewcommand{\theappendixsection}{\Alph{appendixsection}}

\newcommand{\appendixsection}[1]{%
  \refstepcounter{appendixsection}%
  \section*{Appendix \theappendixsection. #1}%
  \addcontentsline{toc}{section}{Appendix \theappendixsection. #1}%
  \hypertarget{appsec:\theappendixsection}{}%
  \def\currentlabel{Appendix \theappendixsection.}%
}

\begin{document}

\begin{center}
		\bf{Trace-distance-based complementarity relations in a multipath interferometer}
       
\end{center}

\begin{center}
    Yue Sun$~^{1}$, Jingyan Liu$~^{2}$, Peng-Tong Li$~^{1}$, Chenxu Li$~^{3}$, Ming-Jing Zhao$~^{2,*}$
    
 \small $^{1}$School of Mathematics, Nanjing University of Aeronautics and Astronautics, Nanjing 210016, P. R. China \\
 \small $~^{2}$ {School of Science, Beijing Information Science and Technology University, Beijing 102206, P. R. China\\}
\small $^{3}$Center for Quantum Information, Institute for Interdisciplinary
Information Sciences, Tsinghua University, Beijing 100084, P. R. China\\
        
 \small $^{*}$Corresponding author: {zhaomingjingde@126.com}  
\end{center}

The complementarity relations in an interferometer reflect an important phenomenon in quantum mechanics: wave-particle duality. Here, we develop a  method to quantify
both wave and particle behaviors in a multi-path interferometer. In particular, we find that the trace  distance is a good candidate for wave and particle measures.  As a result, some duality relations and  triality ralations are established respectively.
This work not only extends the application of the trace distance to the interferometer, but also opens up new perspectives on the quantification of waveness and particleness.

\section{Introduction}
Wave-particle duality serves as one of the cornerstones of quantum mechanics, revealing the intrinsic nature of microscopic particles possessing both wave and particle features. It reveals that the wave and particle behaviors are mutually exclusive yet complementary.
For the two-path experiment, the first wave-particle duality in a quantitative way is \cite{zuizao,GY}
\begin{equation*}
P^{2}+V^{2}\leq 1,
\end{equation*} 
where $P$ measures particle behavior (which-path information), and $V$ measures wave behavior (interference). This formula shows a clear trade-off: knowing more about one aspect means knowing less about the other. 
In order to acquire further the path information, Englert introduced an interferometer model with path detectors, which led to  wave-particle duality relation \cite{Englert}
\begin{equation*}
D^{2}+V^{2}\leq 1,
\end{equation*}
where $D$ referred to as {\it path distinguishability}, quantifies the particle features in the two-path interferometer, and $V$ quantifies the visibility of the interference fringes.

Subsequently, quantitative characterization for the  wave and particle behaviors has attracted widespread attention, particularly in multi-path interferometers \cite{Ding,n1,n2,YS,TQ,GR,EK,Paul}. 
The first quantitative wave-particle duality relation in multi-path interferometers was proposed by D\"{u}rr, who also established criteria for the wave and particle measures \cite{durr}. With the development of quantum coherence, the wave measures based on coherence measures and the related wave-particle duality relations have been proposed, such as the $l_1$ norm \cite{MH}, relative entropy \cite{EB}, fidelity \cite{xiong}, skew information \cite{s3}, Fisher information \cite{n6}, etc.  In addition, numerous experimental studies have verified wave-particle duality from various aspects \cite{Biswas,Machado,Yoon}. 

It is true that all coherence measures meet the criteria for the wave measure and can be viewed as a wave measure formally. Hence some coherence-based methods are generally proposed to construct wave measures \cite{n3,du1}. However the coherence and waveness are not equivalent. For example, the $l_2$ norm (and more generally, the $l_p$-norm) which is not monotonic under incoherent operations is indeed a common-used wave measure \cite{durr,Tao}.
The relation between coherence and waveness is factually rather ambiguous. In the context, we intend 
to characterize the wave behavior from the perspective that is irrelevant to the coherence under the framework of incoherent operations.  On this basis, we develop a method to construct wave and particle measures,
including the relative entropy and the trace distance as special cases. 
As we know, the trace distance is generally not a coherence measure. However this work highlights its ability to quantify the wave behavior as well as the particle behavior in the multi-path interferometer.

To reveal the constraint relation between waveness and particleness, the wave-particle dualities in terms of the trace distance are established. In this process, we also find the role of mixedness in the multi-path interferometer.  Therefore, the wave-particle-mixedness triality are established further. Considering the extensive applications of the trace distance, such as quantum
cryptographic protocols \cite{Ben,Renner}, quantum complexity theory \cite{Watrous}, quantum state discrimination {\cite{MH}}, entanglement measure {\cite{Eisert}}, the application of the trace distance in the multi-path interferometer may be a potential link between different physical scenarios.


The structure of this paper is organized as follows. In Section \ref{secII}, we introduce the necessary background and develop a method to quantify both wave and particle behaviors in a multi-path interferometer. In Section \ref{sec III}, we study the complementarity relations based on trace distance in a multi-path interferometer, in which four wave-particle dualities, three wave-particle-mixedness trialities, and one generalized triality are established, respectively. Conclusions are made in Section \ref{sec IV}.

\section{Wave and Particle Measures in a multi-path interferometer}\label{secII}

In this section, we consider a \(d\)-path interferometer, as illustrated in Fig. \ref{fg1}. Let the orthonormal basis state $\{|i\rangle\}_{i=1}^d$ corresponds to the $d$ possible paths in the interferometer.
For any particle  enters  \(d\)-path interferometer, after passing through a beam splitter, it can be described by 
\begin{equation}
  \rho = \sum_{i,j=1}^{d} \rho_{ij} |i\rangle\langle j|, 
\end{equation}
with $\rho_{ij} = \langle i|\rho|j\rangle $ under the path basis $\{|i\rangle\}_{i=1}^d$.
In the interferometer, the particle exhibits both wave and particle behavior. The particle behavior is characterized by the path information, which is determined by the 
probability distribution $(\rho_{11}, \rho_{22}, \cdots, \rho_{dd})^T$. 
The wave behavior is characterized by the interference fringe visibility on the screen, which is determined by the superposition passing through different paths.

\begin{figure}[htbp]
    \centering
    \includegraphics[width=0.45\textwidth]{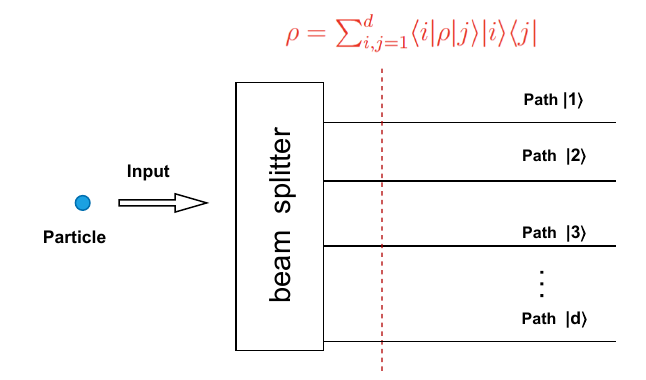}
    \caption{Schematic of a $d$-path interferometer.}
    \label{fg1}
\end{figure}

To quantify the particleness of the particle in the multi-path interferometer, the requirement for the measure of particleness \(\mathcal{P}(\rho)\) is proposed as  \cite{EK,durr,n3,du1}
\begin{enumerate}[(P1)]
\item \(\mathcal{P}(\rho)\) attains its global minimum when the particle's path is completely uncertain, i.e. $\langle i|\rho| i\rangle=1/d$ for all $i$.
    \item \(\mathcal{P}(\rho)\) attains its global maximum when the particle's path is completely certain, i.e. $\langle i|\rho| i\rangle=1$ for some $i$.
    \item \(\mathcal{P}(\rho)\) is invariant under the permutations of the path labels.
    \item \(\mathcal{P}(\rho)\) is convex.
\end{enumerate}

Likewise, to quantify the waveness of the particle in the multi-path interferometer, the requirement for wave measure \(\mathcal{W}(\rho)\) is proposed as \cite{EK,durr,n3,du1}
\begin{enumerate}[(W1)]
    \item $\mathcal{W}(\rho)$ attains its global minimum
   when $\rho$ is diagonal in the path basis. 
    \item $\mathcal{W}(\rho)$ attains its global maximum {when $\rho$ is pure state with equal diagonal entries i.e.  $\langle i|\rho|i\rangle = \frac{1}{d}$ for all $i$ }.
    \item $\mathcal{W}(\rho)$ is invariant under the permutations of the path labels.
    \item $\mathcal{W}(\rho)$ is convex.
\end{enumerate}

In fact, the mutual exclusivity between waveness and particleness is reflected by the requirements of the wave measure and particle measure. When the waveness in the interferometer attains the maximum, then the 
particle is in  the maximally coherent pure state 
$|\psi\rangle_{\max} = \frac{1}{\sqrt d}\sum_{i=1}^d e^{i\theta_i}|i\rangle$,
which is equally likely to occur across all possible paths.
In this case, the particleness of the pure state $|\psi\rangle_{\max} $ reaches the minimum.  Conversely, when the particleness in the interferometer attains the maximum, then the 
particle is in one of the basis states $\{|i\rangle\}$. In this case,
the waveness reaches the minimum.

By the analysis of elementary process in the interferometer \cite{Yadin2016}, we know the operations acting on a particle travelling through an interferometer are all strictly incoherent operations, which are characterized by the Kraus representations
\(\Lambda(\rho) = \sum_{\mu} K_{\mu} \rho K_{\mu}^{\dagger}\)
\cite{Baumgratz2014,YangD,Yadin2016}, where each strictly incoherent
Kraus operator \(K_\mu\) satisfies both
\(K_\mu \mathcal{I} K_\mu^\dagger \subset \mathcal{I}\) and
\(K_\mu^\dagger \mathcal{I} K_\mu \subset \mathcal{I}\) with $\mathcal{I}$ the set of incoherent state.
These conditions imply that \(K_\mu\) is compatible with the dephasing map \(\Delta(\cdot) = \sum_i |i\rangle\langle i|(\cdot)|i\rangle\langle i|\) in the sense that \(\Delta(K_\mu \rho K_\mu^\dagger)=
K_\mu \Delta(\rho) K_\mu^\dagger .\)

Next we are going to define a class of functions to quantify the waveness and particlenss in the $d$-path interferometer. We denote the state space of the particle by \(\mathcal H\), and the set of all quantum states on \(\mathcal H\) is denoted by \(D(\mathcal H)\). 
The  diagonal state $\rho_{\rm diag}$ with respect to the particle state $\rho$ is denoted by \(\rho_{\rm diag} = \Delta(\rho)\).
Consequently, we define a function $D: D(\mathcal{H})\times D(\mathcal{H})\rightarrow \mathbb{R}_{\geq 0}$, where $\mathbb{R}_{\geq 0}$ denotes the set of nonnegative real numbers, such that for any quantum states $\rho$ and $\sigma$, it has
\begin{enumerate}
    \item[(1)](Nonnegativity) $D(\rho,\sigma)\geq 0$ and $D(\rho,\sigma)=0$ if and only if $\rho=\sigma$.
    \item[(2)] (Monotonicity) $D$ does not increase for strictly incoherent operation $\Phi$, that is, $D(\Phi(\rho), \Phi(\sigma))\leq D(\rho,\sigma)$ .
    \item[(3)] (Joint convexity) $D(\sum_{i}p_{i}\rho_{i},\sum_{i}p_{i}\sigma_{i})\leq \sum_{i}p_{i}D(\rho_{i},\sigma_{i})$  for any sets of states $\{\rho_{i}\}$ and $\{\sigma_{i}\}$ and any probability distribution $\{p_i\}$. 
\end{enumerate}
The function $D$ may not necessarily be a metric. Next we show that such a function $D$ can be used to quantify the 
wave and particle behavior in a multi-path interferometer.

\begin{thm}\label{thm1}
    For any quantum state $\rho$, 
    \begin{equation}
        \mathcal{W}(\rho)=D(\rho,\rho_{\rm{diag}})
    \end{equation}
 is a reasonable measure of wave feature.
\end{thm}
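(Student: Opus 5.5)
We verify (W1)--(W4) for $\mathcal{W}(\rho)=D(\rho,\Delta(\rho))$, using only the three axioms on $D$ and the facts that permutation unitaries, diagonal unitaries, and $\Delta$ itself are (strictly) incoherent operations.

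\textbf{(W1) and (W3).} For (W1), nonnegativity gives $\mathcal{W}(\rho)\ge 0$, and $\mathcal{W}(\rho)=0$ iff $\rho=\Delta(\rho)$, i.e. iff $\rho$ is diagonal in the path basis. So the global minimum $0$ is attained exactly on diagonal states. For (W3), let $P_\pi$ be a permutation unitary. Conjugation by $P_\pi$ is strictly incoherent and invertible with strictly incoherent inverse, and $\Delta(P_\pi\rho P_\pi^\dagger)=P_\pi\Delta(\rho)P_\pi^\dagger$. Applying monotonicity in both directions gives equality, so $\mathcal{W}(P_\pi\rho P_\pi^\dagger)=\mathcal{W}(\rho)$.

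\textbf{(W4).} Take $\rho=\sum_k p_k\rho_k$. By linearity, $\Delta(\rho)=\sum_k p_k\Delta(\rho_k)$, so joint convexity gives
\[
\mathcal{W}(\rho)\le\sum_k p_k D(\rho_k,\Delta(\rho_k))=\sum_k p_k\mathcal{W}(\rho_k).
\]

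\textbf{(W2).} This is the main obstacle. I would argue in three steps.
\begin{enumerate}
\item Every maximally coherent pure state $|\psi\rangle_{\max}$ has the same value of $\mathcal{W}$. Any two such states are related by a diagonal phase unitary, which is strictly incoherent with strictly incoherent inverse and commutes with $\Delta$. The argument used for (W3) then shows the value is the same for all of them.
\item By convexity (W4), the maximum of $\mathcal{W}$ over all states is attained on pure states.
\item For an arbitrary pure state $|\phi\rangle$, I would exhibit a strictly incoherent operation $\Lambda$ with $\Lambda(|\psi\rangle\langle\psi|_{\max})=|\phi\rangle\langle\phi|$ and $\Lambda\circ\Delta=\Delta\circ\Lambda$. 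Monotonicity then yields
\[
\mathcal{W}(\phi)=D\bigl(\Lambda(\psi_{\max}),\Lambda(\Delta\psi_{\max})\bigr)\le D\bigl(\psi_{\max},\Delta\psi_{\max}\bigr)=\mathcal{W}(\psi_{\max}).
\]
\end{enumerate}
The existence of $\Lambda$ in step 3 comes from the known majorization criterion for pure-state conversion under strictly incoherent operations: $|\psi\rangle\to|\phi\rangle$ is possible iff the squared-modulus vector of $|\psi\rangle$ is majorized by that of $|\phi\rangle$. The uniform vector $(1/d,\dots,1/d)$ is majorized by every probability vector, so the conversion from $|\psi\rangle_{\max}$ is always available. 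Compatibility with $\Delta$ holds because each strictly incoherent Kraus operator satisfies $\Delta(K\rho K^\dagger)=K\Delta(\rho)K^\dagger$, as noted above. Since $\Delta$ is linear, this gives $\Lambda(\Delta(\psi_{\max}))=\Delta(\Lambda(\psi_{\max}))=\Delta(\phi)$.

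The delicate point is the citation of the strictly incoherent majorization result. If that is unavailable, I would construct $\Lambda$ explicitly for the uniform input using phase-permutation Kraus operators weighted by the amplitudes of $\phi$.
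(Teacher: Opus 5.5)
Your proposal is correct and matches the paper's proof essentially step for step. (W1), (W3) and (W4) follow from nonnegativity, from monotonicity under permutations applied in both directions, and from joint convexity; (W2) follows from the strictly incoherent majorization conversion $|\psi\rangle_{\max}\to|\phi\rangle$, the compatibility $\Delta\circ\Phi=\Phi\circ\Delta$, and a joint-convexity reduction to pure states. Your extra phase-unitary step (which also follows from step 3 applied in both directions) and your fallback are harmless refinements, and the fallback does work: the Kraus operators $K_k=\sum_i \phi_{i+k}|i+k\rangle\langle i|$, with indices mod $d$, are strictly incoherent and map $|\psi\rangle_{\max}\langle\psi|$ to an arbitrary pure $|\phi\rangle\langle\phi|$, phases included.
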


\begin{proof}
Now we will prove  $\mathcal{W}(\rho)$ satisfies the conditions (W1)-(W4). The first item (W1) is obvious.

For any pure state $|\phi\rangle=\sum_{i=1}^{d} 
\sqrt{p_{i}}|i\rangle$, since $(\frac{1}{d},\frac{1}{d},\cdots,\frac{1}{d}) \prec (p_{1},p_{2},\cdots, p_{d})$,  there is a strictly incoherent operation $\Phi$ such that $\Phi(|\psi\rangle_{\max} \langle\psi|) = |\phi\rangle \langle\phi|$  \cite{YangD}. Because of the monotonicity of the function $D$ under strictly incoherent operations, it follows that
\begin{align}\label{Eq. SIO}
   D(|\phi\rangle \langle\phi|, (|\phi\rangle \langle\phi|)_{\text{diag}}) &= D(\Phi(|\psi\rangle_{\max} \langle\psi|), \Delta \circ \Phi(|\psi\rangle_{\max} \langle\psi|))\nonumber \\
   &= D(\Phi(|\psi\rangle_{\max} \langle\psi|), \Phi \circ \Delta(|\psi\rangle_{\max} \langle\psi|))\nonumber\\
   &\leq D(|\psi\rangle_{\max} \langle\psi|, (|\psi\rangle_{\max} \langle\psi|)_{\text{diag}}). 
\end{align}
Then for any mixed state $\rho=\sum_{i}p_{i}|\phi_{i}\rangle\langle \phi_{i}|$, by the joint convexity of the function $D$ and Eq. (\ref{Eq. SIO}), we have
\begin{eqnarray*}
    \mathcal{W}(\rho)&=&D(\sum_{i}p_{i}|\phi_{i}\rangle\langle \phi_{i}|,(\sum_{i}p_{i}|\phi_{i}\rangle\langle \phi_{i}|)_{\text{diag}})\\
    &\leq& \sum_{i}p_{i}D(|\phi_{i}\rangle\langle \phi_{i}|, (|\phi_{i}\rangle\langle \phi_{i}|)_{\text{diag}})\\
    &\leq& D(|\psi\rangle_{\max} \langle\psi|, (|\psi\rangle_{\max} \langle\psi|)_{\text{diag}}).
\end{eqnarray*}
Thus, $\mathcal{W}(\rho)$ reaches its global maximum if $\rho$ is pure with equal diagonal entries, i.e.  $\langle i|\rho|i\rangle = 1/d$ for all $i$.  

As to (W3), for any permutation matrix $P$, we have $\mathcal{W}(P\rho P^\dagger)=D(P\rho P^\dagger, (P\rho P^\dagger) _{\rm{diag}})=D(P\rho P^\dagger, P\rho_{\rm{diag}}P^\dagger)\leq \mathcal{W}(\rho)$ because the permutation $P$ is strictly incoherent. Meanwhile, applying the inverse operation  to quantum state $P\rho P^\dagger$, we obtain $\mathcal{W}(\rho)\leq W(P\rho P^\dagger)$.
 Therefore, $\mathcal{W}(\rho)$ is invariant under the permutations of the path labels.

As to (W4), for any quantum state $\rho$ and $\sigma$ and $0\leq\lambda\leq 1$, we have
\begin{align*}
    \mathcal{W}(\lambda \rho+(1-\lambda)\sigma)&=D(\lambda \rho+(1-\lambda)\sigma, \lambda \rho_{\text{diag}}+(1-\lambda)\sigma_{\text{diag}})\\
    &\leq \lambda D(\rho,\rho_{\text{diag}})+(1-\lambda)D(\sigma,\sigma_{\text{diag}})\\
    &=\lambda \mathcal{W}(\rho)+(1-\lambda)\mathcal{W}(\sigma),
\end{align*}
where the inequality is due to the joint convexity of $D$. This completes the proof.
\end{proof}

\begin{thm}\label{thm2}
    For any quantum state $\rho$, 
    \begin{equation}
        \mathcal{P}(\rho)=D(\rho_{\rm{diag}},\frac{I}{d})
    \end{equation}
is a reasonable measure of particle feature.
\end{thm}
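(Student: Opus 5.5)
We check (P1)--(P4) for $\mathcal{P}(\rho)=D(\rho_{\rm diag},\frac{I}{d})$ in turn, using only nonnegativity, monotonicity under strictly incoherent operations, and joint convexity of $D$, in parallel with the proof of Theorem \ref{thm1}.

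\textbf{(P1).} By nonnegativity, $\mathcal{P}(\rho)\geq 0$, with equality if and only if $\rho_{\rm diag}=I/d$, i.e. $\langle i|\rho|i\rangle=1/d$ for all $i$. So the global minimum $0$ is attained exactly when the path is completely uncertain.

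\textbf{(P3).} Let $P$ be a permutation matrix. Then $(P\rho P^\dagger)_{\rm diag}=P\rho_{\rm diag}P^\dagger$ and $PIP^\dagger/d=I/d$. Since conjugation by $P$ is strictly incoherent, monotonicity gives $\mathcal{P}(P\rho P^\dagger)\leq\mathcal{P}(\rho)$. Applying the same argument with $P^\dagger$ gives the reverse inequality, hence equality, exactly as in the proof of Theorem \ref{thm1}.

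\textbf{(P4).} Since $\Delta$ is linear,
\[
\mathcal{P}(\lambda\rho+(1-\lambda)\sigma)=D\!\left(\lambda\rho_{\rm diag}+(1-\lambda)\sigma_{\rm diag},\,\lambda\tfrac{I}{d}+(1-\lambda)\tfrac{I}{d}\right),
\]
and joint convexity of $D$ gives convexity of $\mathcal{P}$.

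\textbf{(P2).} This is the main step. I will show $\mathcal{P}(\rho)\leq D(|1\rangle\langle 1|,I/d)$ for every $\rho$. By (P3) this common value is $\mathcal{P}$ of any basis state $|i\rangle\langle i|$, which establishes the global maximum.

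Write $\rho_{\rm diag}=\sum_i p_i|i\rangle\langle i|$. For each $i$, let $P_i$ be the cyclic shift sending $|1\rangle\mapsto|i\rangle$. Then $P_i|1\rangle\langle1|P_i^\dagger=|i\rangle\langle i|$ and $P_i(I/d)P_i^\dagger=I/d$. Therefore $\rho_{\rm diag}=\sum_i p_i\,|i\rangle\langle i|$ and $\frac{I}{d}=\sum_i p_i\,\frac{I}{d}$. Joint convexity then gives
\[
\mathcal{P}(\rho)\leq\sum_i p_i\,D\!\left(|i\rangle\langle i|,\tfrac{I}{d}\right),
\]
and monotonicity under the permutation unitaries $P_i$ (in both directions) gives $D(|i\rangle\langle i|,I/d)=D(|1\rangle\langle1|,I/d)$ for each $i$. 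The bound follows.

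The only subtlety is to pair the mixtures so that the second argument is the same $I/d$ in every term. Convex decomposition of the first argument alone would not suffice, because joint convexity requires matching mixtures in both arguments.

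Finally, if "global maximum when $\langle i|\rho|i\rangle=1$" is read as an "if" statement, the argument above is complete. If a converse is wanted, it would need strictness properties of $D$ that are not assumed, so I would claim only the "if" direction, as in (W2).
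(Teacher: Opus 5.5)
Your proof is correct. For (P1), (P3) and (P4) it is what the paper intends; the paper only says these mirror (W1), (W3), (W4) of Theorem~\ref{thm1}. For (P2) you use the same two ingredients as the paper, permutation invariance and convexity of $\rho\mapsto D(\rho_{\rm diag},I/d)$, but you package them differently.

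The paper regards $\mathcal{P}$ as a function $f$ of the diagonal vector $\vec\alpha$. It notes that a symmetric convex $f$ is Schur-convex and then invokes the majorization $\vec\alpha\prec(1,0,\ldots,0)^T$. You instead write the explicit convex decomposition $\rho_{\rm diag}=\sum_i p_i|i\rangle\langle i|$, pair it with $I/d=\sum_i p_i\, I/d$, and apply joint convexity directly. This is exactly the Schur-convexity step specialized to the extreme vector $(1,0,\ldots,0)^T$, whose permutations are the vertices of the probability simplex. Your version is therefore self-contained and needs no appeal to majorization theory.

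The paper's formulation gives a stronger structural statement: $\mathcal{P}(\rho)\le\mathcal{P}(\rho')$ whenever the path distribution of $\rho$ is majorized by that of $\rho'$. That statement contains (P2) as one endpoint and, via $(1/d,\ldots,1/d)\prec\vec\alpha$, (P1) as the other.

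Your closing caveat, that the stated axioms only give the ``if'' direction of (P2), matches what the paper actually proves.
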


\begin{proof}
The proofs for the items (P1), (P3) and (P4) are similar to that of (W1), (W3) and (W4) of Theorem \ref{thm1}, so we omit them here. Now we prove 
$\mathcal{P}(\rho)$ satisfies the condition (P2). 
For any quantum state $\rho=\sum_{i,j=1}^{d}\rho_{ij}|i\rangle\langle j|$, by definition we know $\mathcal{P}(\rho)$ is factually a function of the probability vector $\vec{\alpha}=(\rho_{11}, \rho_{22}, \cdots,\rho_{dd})^{T}$. Let $f(\vec{\alpha})=\mathcal{P}(\rho)$, by the condition (P3), we know $f(\vec{\alpha})$ is symmetric, i.e., $f(P\vec{\alpha})=f(\vec{\alpha})$ for any permutation matrix $P$.
Combining the convexity of $D$, it follows that $f$ is Schur-convex. By the majorization relation \cite{matrix}, $\vec{\alpha}\prec (1,0,\cdots,0)^{T}$,
we know $f(\vec{\alpha})\leq f((1,0,\cdots,0)^{T})$, which means that $\mathcal{P}(\rho)\leq \mathcal{P}(\sigma)$, where $\sigma$ is one of the basis states $\{|i\rangle\}$.  This completes the proof.
\end{proof}

{Refs. \cite{n3,du1} have constructed two classes of functions to quantify the waveness  based on coherence measures, and established the corresponding complementarity relations in the multi-path interferometer. In fact, the wave measure is not necessarily a coherence measure, such as the wave measure based on the $l_2$ norm \cite{GR,durr}, $l_p$ norm \cite{Tao},
fidelity \cite{xiong} . Therefore, Theorem 1 provides a more general method for constructing wave measure.
Ref. \cite{du1} also asks whether a coherence measure satisfying only non‑negativity, monotonicity, and convexity  can be able to characterize the wave–particle duality. We answer this question affirmatively. 
Furthermore, the monotonicity can be confined to the strictly incoherent operations instead of incoherent operations.} It extends the wave measures to a greater extent.


\section{The complementarity relation based on the trace distance}\label{sec III}

For any two density operators $\rho$ and $\sigma$, the trace distance between them is given by $T(\rho, \sigma) = \frac{1}{2} \|\rho - \sigma\|_{\text{tr}}$, where $\|\cdot\|_{\text{tr}}$ denotes the trace norm, defined as $\|A\|_{\text{tr}} = \text{Tr}\sqrt{A^\dagger A}$.  Since the trace distance is contractive under completely positive and trace-preserving maps and, together with its fundamental properties as a norm (non-negativity, positive homogeneity, and the triangle inequality), it satisfies all the requirements for the function $D$. So the trace distance is able to quantify waveness and particleness. 

Operationally speaking, according to the Holevo-Helstrom theorem, the waveness in terms of $T(\rho,\rho_{\text{diag}})$ upper bounds the advantage of any strategy that tries to discriminate $\rho$ and $\rho_{\text{diag}}$ under a one shot scenario. In practice, $\rho_{\text{diag}}$ can be prepared by making all the paths perfectly distinguishable, corresponding to thea maximally decoherence operation $\Delta$. The proposed method in \cite{Paul} for measuring coherence can be viewed as a special case of our approach where they only measure the difference of maximum intensity generated from $\rho$ and $\rho_{\text{diag}}$.

In this section, we study the complementarity relations in the $d$-path interferometer based on the trace distance. We adopt the normalized wave and particle measures to explore their restrictions from now on. So we need to analyze the maximum of the wave and particle measures first.

\begin{lem} \label{Eq. lem-im}
    If $\rho$ is pure state, then  $\rho-\rho_{\rm{diag}}$ has at most one positive eigenvalue.
\end{lem}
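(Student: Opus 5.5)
Write $\rho=|\phi\rangle\langle\phi|$ with $|\phi\rangle=\sum_{i=1}^d c_i|i\rangle$. Then $\rho_{\rm diag}=\sum_i |c_i|^2|i\rangle\langle i|$ is positive semidefinite, and $\rho-\rho_{\rm diag}=|\phi\rangle\langle\phi|-\rho_{\rm diag}$ is a rank-one positive perturbation of the negative semidefinite operator $-\rho_{\rm diag}$. The plan is to show that $\rho-\rho_{\rm diag}$ is negative semidefinite on a subspace of codimension one. The min--max principle will then give at most one positive eigenvalue.

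Concretely, let $\mathcal{S}=\{|v\rangle\in\mathcal H:\langle\phi|v\rangle=0\}$. This is the orthogonal complement of $|\phi\rangle$, so it has dimension $d-1$ when $|\phi\rangle\neq 0$, which holds since $\rho$ is a state. For every $|v\rangle\in\mathcal S$,
\[
\langle v|(\rho-\rho_{\rm diag})|v\rangle=|\langle\phi|v\rangle|^2-\langle v|\rho_{\rm diag}|v\rangle=-\sum_i |c_i|^2|v_i|^2\le 0 .
\]
Let $\lambda_1\ge\lambda_2\ge\cdots\ge\lambda_d$ be the eigenvalues of the Hermitian operator $\rho-\rho_{\rm diag}$. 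The Courant--Fischer characterization gives
\[
\lambda_2=\min_{\dim \mathcal{S}'=d-1}\ \max_{|v\rangle\in\mathcal S',\,\|v\|=1}\langle v|(\rho-\rho_{\rm diag})|v\rangle .
\]
Taking $\mathcal S'=\mathcal S$ shows $\lambda_2\le 0$. Hence at most $\lambda_1$ can be positive, which proves the lemma.

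Alternatively, one can apply Weyl's interlacing inequality for rank-one perturbations: $\lambda_{k+1}(A+B)\le\lambda_k(A)$ with $A=-\rho_{\rm diag}$ and $B=|\phi\rangle\langle\phi|$. This gives $\lambda_2(\rho-\rho_{\rm diag})\le\lambda_1(-\rho_{\rm diag})\le 0$ directly.

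No step is really an obstacle; the only point needing care is invoking the variational principle correctly. I would also remark, for later use, that $\operatorname{Tr}(\rho-\rho_{\rm diag})=0$. Consequently the positive eigenvalue, when it exists, equals half the trace norm, so $T(\rho,\rho_{\rm diag})=\lambda_1$. This is presumably how the lemma will be used in computing the maximum of the wave measure.
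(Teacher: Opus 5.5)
Your proof is correct, and your alternative argument is exactly the paper's proof: Weyl's inequality with $A=-\rho_{\rm diag}$, $B=\rho$, using $\lambda_2^{\downarrow}(\rho)=0$ and $\lambda_1^{\downarrow}(-\rho_{\rm diag})\le 0$. Your main argument, Courant--Fischer applied to the explicit codimension-one subspace $|\phi\rangle^{\perp}$, is just the variational proof behind that Weyl inequality written out, so the approach is essentially the same; your closing remark that the positive eigenvalue equals $T(\rho,\rho_{\rm diag})$ is also exactly how the paper uses the lemma in the next result.
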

\begin{proof}
 By Weyl's inequality \cite{matrix}, for any $d\times d$ Hermitian matrices \(A\) and \(B\), we have \(\lambda_{i+j-1}^{\downarrow}(A+B) \le \lambda_i^{\downarrow}(A)+\lambda_j^{\downarrow}(B)\), where \(\lambda_k^{\downarrow}(\cdot)\) denotes the \(k\)-th largest eigenvalue of a Hermitian matrix, $i,j,k=1,2,\cdots, d$. Applying this with \(A=-\rho_{\rm diag}\), \(B=\rho\), and taking \(i=1\), \(j=2\), yields  
\[
\lambda_2^{\downarrow}(-\rho_{\rm diag}+\rho) \le \lambda_1^{\downarrow}(-\rho_{\rm diag}) + \lambda_2^{\downarrow}(\rho).
\]
Since \(\rho\) is a rank-one positive semidefinite matrix, its second largest eigenvalue vanishes, i.e. \(\lambda_2^{\downarrow}(\rho)=0\). Furthermore, \(-\rho_{\rm diag}\) is negative semidefinite, so \(\lambda_1^{\downarrow}(-\rho_{\rm diag})\le 0\). Consequently,
\[
\lambda_2^{\downarrow}(\rho-\rho_{\rm diag}) \le \lambda_1^{\downarrow}(-\rho_{\rm diag}) \le 0,
\]
which implies that \(\rho-\rho_{\rm diag}\) has at most one positive eigenvalue.   
\end{proof}


\begin{lem}\label{Eq. thm1}
For any  $d$-dimensional quantum state $\rho$, we have $||\rho-\rho_{\rm{diag}}||_{\rm{tr}}\leq \frac{2(d-1)}{d}$, {and the equality holds if and only if $\rho = |\psi\rangle_{\max}\langle\psi|$.} 
\end{lem}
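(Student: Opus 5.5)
First reduce to pure states. The map $\rho\mapsto\|\rho-\rho_{\rm diag}\|_{\rm tr}=2T(\rho,\rho_{\rm diag})$ is convex, by the joint convexity of the trace distance as in (W4) of Theorem \ref{thm1}. So for a decomposition $\rho=\sum_k p_k|\phi_k\rangle\langle\phi_k|$ it suffices to show $\|\phi-\phi_{\rm diag}\|_{\rm tr}\le 2(d-1)/d$ for every pure $\phi=|\phi\rangle\langle\phi|$, with equality only for maximally coherent states. Equality for a mixed $\rho$ would then force every $\phi_k$ in every decomposition to be maximally coherent. For a rank $\ge 2$ state we can choose a decomposition into vectors of its support that are not all maximally coherent, so equality forces $\rho$ to be pure.

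For a pure state, Lemma \ref{Eq. lem-im} says $A=\phi-\phi_{\rm diag}$ has at most one positive eigenvalue $\lambda_+$. Since ${\rm Tr}A=0$, the negative eigenvalues sum to $-\lambda_+$, so $\|A\|_{\rm tr}=2\lambda_+=2\lambda_{\max}(A)$. Write $|\phi\rangle=\sum_i\sqrt{p_i}e^{i\theta_i}|i\rangle$. Then
\[
\lambda_{\max}(A)=\max_{\|x\|=1}\Big(|\langle x|\phi\rangle|^2-\sum_i p_i|x_i|^2\Big).
\]
The next step is to bound this by $1-1/d$. The cleanest route is to conjugate by the diagonal unitary of phases, which turns $A$ into $vv^T-{\rm diag}(p)$ with $v=(\sqrt{p_i})$. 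Because this matrix has nonpositive diagonal and nonnegative off-diagonal entries, its top eigenvector $x$ can be taken entrywise nonnegative.

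The main obstacle is the sharp estimate $\lambda_{\max}\le 1-1/d$ with its equality case. My first attempt would use the secular equation. Away from degenerate cases, $\lambda$ is an eigenvalue of $vv^T-{\rm diag}(p)$ iff
\[
1=\sum_i \frac{p_i}{\lambda+p_i}.
\]
For the positive root, $\lambda>0$ and the map $t\mapsto t/(\lambda+t)$ is concave. Jensen's inequality with uniform weights $1/d$ over the $p_i$ then gives $\sum_i p_i/(\lambda+p_i)\le d\cdot\frac{1/d}{\lambda+1/d}$. Hence $1\le 1/(d\lambda+1)\cdot d$, i.e.\ $\lambda\le (d-1)/d$.

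Strict concavity gives equality iff all $p_i=1/d$, which is exactly $|\psi\rangle_{\max}$. Conversely, for $|\psi\rangle_{\max}$ the matrix is $J/d-I/d$ up to phases, with eigenvalues $(d-1)/d$ and $-1/d$. Its trace norm is therefore $2(d-1)/d$, confirming that the equality is attained.

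Degenerate cases with some $p_i=0$ reduce to a lower dimension $d'<d$. There the bound $(d'-1)/d'<(d-1)/d$ is strict, which is consistent with the equality statement.
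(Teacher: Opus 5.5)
Your proposal is correct and follows the paper's proof essentially step for step. Both reduce to pure states by convexity, use Lemma \ref{Eq. lem-im} with ${\rm Tr}A=0$ to get $\|A\|_{\rm tr}=2\lambda_+$, invoke the secular equation $\sum_i p_i/(\lambda_++p_i)=1$, and apply Jensen's inequality with uniform weights to obtain both the bound and its equality case. The secular equation holds for every positive eigenvalue, so your ``away from degenerate cases'' caveat and the separate treatment of $p_i=0$ are unnecessary. Your explicit argument that equality forces $\rho$ to be pure goes slightly beyond the paper, which leaves the mixed-state equality case implicit. The one assertion you leave unproved there is true and easy to justify: every vector in the support occurs in some decomposition, and a support of dimension $\ge 2$ contains a vector that is not maximally coherent.
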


\begin{proof}
Since the trace norm is a convex function, it suffices to prove the upper bound $||\rho-\rho_{\rm{diag}}||_{\rm{tr}}$ for pure states.
For any pure state $\rho = |\psi\rangle\langle\psi|$ with $|\psi\rangle = \sum_{i=1}^{d} c_i |i\rangle$ and $\sum_i |c_i|^2 = 1$,
the corresponding diagonal state is $\rho_{\text{diag}} = {\rm diag} (|c_1|^2, \dots, |c_d|^2)$.
We denote the matrix $A = \rho - \rho_{\text{diag}}$. By Lemma \ref{Eq. lem-im}, we know $A$ has at most one positive eigenvalue. First, if $A$ doesn't have positive eigenvalue, then $A$ must be a zero matrix due to ${\rm Tr}(A) = 0$. This is the trivial case. Second, if $A$ has one positive eigenvalue  $\lambda_+$, then the trace norm is $||A||_{\rm {tr}} = 2\lambda_+$.  In appendix \ref{app:C}, for any $\lambda_+>0$, it can be shown that
\begin{equation}\label{Eq. equ}
    \sum_{i=1}^{d} \frac{|c_i|^2}{\lambda_+ + |c_i|^2} = 1,
\end{equation}
and
\begin{equation}\label{Eq. upper}
  \sum_{i=1}^{d} \frac{|c_i|^2}{\lambda_+ + |c_i|^2}\leq \frac{1}{\lambda_+ +1/d}, 
\end{equation}
and the equality holds when $|c_i|^2 = \frac{1}{d}$ for $i=1,2,\cdots, d$. 
Eqs. (\ref{Eq. equ}) and (\ref{Eq. upper}) together 
implies $\lambda_+ \leq 1 - \frac{1}{d}$. Thus $||\rho-\rho_{\text{diag}}||_{\rm{tr}}=2\lambda_{+}\leq \frac{2(d-1)}{d}$, and the equality holds if and only if $|c_{i}|^{2}=\frac{1}{d}$ for all $i$ which means that  $\rho = |\psi\rangle_{\max}\langle\psi|$. This completes the proof.
\end{proof}

Lemma \ref{Eq. thm1} shows the maximum for the trace distance $||\rho-\rho_{\rm{diag}}||_{\rm{tr}}$, which will serve as a modification coefficient to quantify waveness. 

\begin{thm}
For any  $d$-dimensional quantum state $\rho$,  
\begin{equation}
    \mathcal{W_{\rm{tr}}}(\rho)=\frac{d}{2(d-1)}||\rho-\rho_{\rm{diag}}||_{\rm{tr}}
\end{equation}
is a reasonable measure of wave feature and $0\leq \mathcal{W_{\rm{tr}}}(\rho)\leq 1$.
\end{thm}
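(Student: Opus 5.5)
The plan is to combine Theorem \ref{thm1} with Lemma \ref{Eq. thm1}. The normalized measure is a positive multiple of a $D$-type quantity, so the axioms (W1)--(W4) come from the general construction, and the bounds $0\leq \mathcal{W}_{\rm tr}\leq 1$ come from the normalization constant.

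\textbf{Step 1: the trace distance is admissible.} I will observe, as at the start of this section, that $T(\rho,\sigma)=\frac12\|\rho-\sigma\|_{\rm tr}$ satisfies the three requirements on $D$:
\begin{itemize}
\item[(1)] Nonnegativity holds, with $T(\rho,\sigma)=0$ iff $\rho=\sigma$, because $\|\cdot\|_{\rm tr}$ is a norm.
\item[(2)] Monotonicity under strictly incoherent operations follows from contractivity of the trace distance under all CPTP maps, and strictly incoherent operations are CPTP.
\item[(3)] Joint convexity follows from the triangle inequality and homogeneity:
\[
\Big\|\sum_i p_i(\rho_i-\sigma_i)\Big\|_{\rm tr}\leq \sum_i p_i\|\rho_i-\sigma_i\|_{\rm tr}.
\]
\end{itemize}

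\textbf{Step 2: the axioms transfer.} Theorem \ref{thm1} then shows that $T(\rho,\rho_{\rm diag})$ satisfies (W1)--(W4). Now $\mathcal{W}_{\rm tr}(\rho)=\frac{d}{d-1}T(\rho,\rho_{\rm diag})$, and the factor $\frac{d}{d-1}$ is a positive constant. Multiplying by a positive constant preserves where the global minimum and maximum are attained, permutation invariance, and convexity. So $\mathcal{W}_{\rm tr}$ satisfies (W1)--(W4).

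\textbf{Step 3: the range.} The lower bound $\mathcal{W}_{\rm tr}(\rho)\geq 0$ is immediate from nonnegativity of the norm, with equality exactly when $\rho$ is diagonal. For the upper bound I invoke Lemma \ref{Eq. thm1}, which gives $\|\rho-\rho_{\rm diag}\|_{\rm tr}\leq \frac{2(d-1)}{d}$ for every $d$-dimensional $\rho$. Multiplying by $\frac{d}{2(d-1)}$ yields $\mathcal{W}_{\rm tr}(\rho)\leq 1$. Equality holds for the maximally coherent state $|\psi\rangle_{\max}$, consistent with (W2), so the normalization is attained.

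There is no genuine obstacle, since the heavy lifting is the eigenvalue analysis already done in Lemmas \ref{Eq. lem-im} and \ref{Eq. thm1}. The only point needing care is (W2). Theorem \ref{thm1} shows the maximum is attained at $|\psi\rangle_{\max}$ (for any phases $\theta_i$, since phase unitaries are strictly incoherent). Lemma \ref{Eq. thm1} shows the value there is exactly $\frac{2(d-1)}{d}$. Together these confirm that the normalization constant is the right one.
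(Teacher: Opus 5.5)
Your proposal is correct and follows essentially the same route as the paper. The paper treats this theorem as an immediate consequence of two facts: the trace distance meets the three requirements on $D$, so Theorem \ref{thm1} gives (W1)--(W4) for the positive multiple $\frac{d}{d-1}T(\rho,\rho_{\rm diag})$; and Lemma \ref{Eq. thm1} gives $\|\rho-\rho_{\rm diag}\|_{\rm tr}\le \frac{2(d-1)}{d}$ with equality at $|\psi\rangle_{\max}$, which fixes the normalization so that $0\le\mathcal{W}_{\rm tr}(\rho)\le 1$.
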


Now we come to the normalized particle measure.
Here we need a lemma for the modification coefficient, which is proved in Appendix \ref{app:D}.

\begin{lem} \label{Eq. vec}
    For any probability vector $\vec{x}=(x_1,x_2,\dots,x_d)$, we have $\sum_{i=1}^{d}|x_{i}-\frac{1}{d}|\leq \frac{2(d-1)}{d}$, the equality holds if and only if $x_{i}=1$ for some $i$.
\end{lem}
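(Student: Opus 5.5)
The plan is to use that $\sum_i x_i=1$ and split the coordinates by sign. Let $S=\{i: x_i>\frac1d\}$ and set $k=|S|$. Because
\[
\sum_{i=1}^d \Bigl(x_i-\frac1d\Bigr)=0,
\]
the positive and negative deviations carry equal total mass. Hence
\[
\sum_{i=1}^{d}\Bigl|x_i-\frac1d\Bigr| = 2\sum_{i\in S}\Bigl(x_i-\frac1d\Bigr) = 2\Bigl(\sum_{i\in S}x_i-\frac{k}{d}\Bigr).
\]
If $S$ is empty, every $x_i=\frac1d$ and the sum is $0$, which is strictly below the bound when $d\ge 2$. So the interesting case is $k\ge1$.

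For $k\ge 1$, bound $\sum_{i\in S}x_i\le 1$. This gives
\[
\sum_i\Bigl|x_i-\frac1d\Bigr|\le 2\Bigl(1-\frac kd\Bigr)\le 2\Bigl(1-\frac1d\Bigr)=\frac{2(d-1)}{d}.
\]
An alternative route is convexity. The map $\vec x\mapsto\sum_i|x_i-1/d|$ is convex on the probability simplex, so its maximum is attained at the extreme points $e_i$. At each $e_i$ the value is $(1-\frac1d)+(d-1)\frac1d=\frac{2(d-1)}{d}$. I would present the sign-splitting argument because it also settles the equality case.

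The main care is needed in the "only if" direction. Equality requires both inequalities above to be tight. First, $k=1$. Second, $\sum_{i\in S}x_i=1$, so all mass sits on the single index in $S$ and $x_j=0$ for $j\notin S$. Together these force $\vec x=e_i$ for some $i$. Conversely, plugging in $e_i$ attains the bound, as computed above.

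The only subtlety is $d=1$, where the statement is trivial: both sides are $0$ and $x_1=1$. For $d\ge2$, the case $S=\emptyset$ gives value $0<\frac{2(d-1)}{d}$, so it never yields equality.
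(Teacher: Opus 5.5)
Your proof is correct and follows essentially the same sign-splitting argument as the paper's appendix proof; the paper puts the indices with $x_i\ge \frac{1}{d}$ in the set, so it is automatically nonempty, and bounds $2\sum_{i\in I}x_i-1+\frac{d-2k}{d}=2\sum_{i\in I}x_i-\frac{2k}{d}$ using $\sum_{i\in I}x_i\le 1$ and $k\ge 1$. Your equality analysis, which forces both $k=1$ and $\sum_{i\in S}x_i=1$, is more explicit than the paper's one-line remark.
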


Based on Lemma \ref{Eq. vec}, we derive a normalized particle measure.

\begin{thm}
   For any  $d$-dimensional quantum state $\rho$,  
   \begin{equation}
       \mathcal{P_{\rm{tr}}}(\rho)=\frac{d}{2(d-1)}||\rho_{\rm{diag}}-\frac{I}{d}||_{\rm{tr}}
   \end{equation}
is a reasonable measure of particle feature and $0\leq \mathcal{P_{\rm{tr}}}(\rho)\leq 1$.
\end{thm}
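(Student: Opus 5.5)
The plan is to read this statement as Theorem \ref{thm2} applied to the trace distance, followed by a normalization via Lemma \ref{Eq. vec}. First, as noted at the start of this section, $T(\rho,\sigma)=\frac12\|\rho-\sigma\|_{\rm tr}$ satisfies nonnegativity, monotonicity under strictly incoherent operations (indeed under all CPTP maps), and joint convexity. Hence Theorem \ref{thm2} with $D=T$ shows that $T(\rho_{\rm diag},I/d)$ satisfies (P1)--(P4). Multiplying by the positive constant $\frac{d}{d-1}$ gives $\mathcal{P}_{\rm tr}(\rho)=\frac{d}{d-1}T(\rho_{\rm diag},I/d)$, and this rescaling preserves where the global minimum and maximum are attained, permutation invariance, and convexity. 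So $\mathcal{P}_{\rm tr}$ is a reasonable particle measure.

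For the bounds, write $\rho_{\rm diag}-\frac{I}{d}={\rm diag}(\rho_{11}-\frac1d,\dots,\rho_{dd}-\frac1d)$. Since this matrix is diagonal, its trace norm is simply $\sum_{i=1}^d|\rho_{ii}-\frac1d|$. The vector $\vec{x}=(\rho_{11},\dots,\rho_{dd})$ is a probability vector. Lemma \ref{Eq. vec} therefore gives $\|\rho_{\rm diag}-\frac Id\|_{\rm tr}\le\frac{2(d-1)}{d}$, so $\mathcal{P}_{\rm tr}(\rho)\le 1$. Equality holds exactly when $\rho_{ii}=1$ for some $i$, which is consistent with (P2). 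Nonnegativity is immediate because $\mathcal{P}_{\rm tr}$ is a norm times a positive constant. The value $0$ is attained precisely when $\rho_{ii}=1/d$ for all $i$, matching (P1).

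No step is a real obstacle, since the content sits in Theorem \ref{thm2} and Lemma \ref{Eq. vec}. The only point needing care is the reduction of the trace norm to the $\ell_1$ distance of the diagonal vectors. This follows because both $\rho_{\rm diag}$ and $I/d$ are diagonal in the path basis, so their difference is diagonal and its singular values are the absolute values of its diagonal entries.
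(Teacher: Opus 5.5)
Your proposal is correct and follows the route the paper leaves implicit. The trace distance meets the requirements on $D$, so Theorem~\ref{thm2} gives (P1)--(P4), and the bound $0\le\mathcal{P}_{\rm tr}(\rho)\le 1$ follows from Lemma~\ref{Eq. vec} once $\|\rho_{\rm diag}-I/d\|_{\rm tr}$ is written as $\sum_i|\rho_{ii}-1/d|$, a reduction you state explicitly where the paper does not.
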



\subsection{The wave-particle duality in low dimensional systems}

First, we study the wave-particle duality in low dimensional systems. 

\begin{thm}\label{pro-three}
  For any qubit and qutrit state $\rho$, we have the wave-particle duality 
     \begin{equation}\label{eq bound1}
         \mathcal{W_{\rm{tr}}}(\rho)^{2}+\mathcal{P_{\rm{tr}}}(\rho)^{2}\leq 1.
     \end{equation} 
\end{thm}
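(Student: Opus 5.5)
For a qubit the inequality reduces to the Bloch-ball constraint, and for a qutrit I plan to reduce to pure states and then to a polynomial inequality on the probability simplex.

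\textbf{Qubit case.} For $d=2$ the normalization factor $\frac{d}{2(d-1)}$ equals $1$. The matrix $\rho-\rho_{\rm diag}$ is off-diagonal with eigenvalues $\pm|\rho_{12}|$, so $\mathcal{W}_{\rm tr}(\rho)=2|\rho_{12}|$. Likewise $\mathcal{P}_{\rm tr}(\rho)=|\rho_{11}-\frac12|+|\rho_{22}-\frac12|=|\rho_{11}-\rho_{22}|$. Hence
\[
\mathcal{W}_{\rm tr}^2+\mathcal{P}_{\rm tr}^2=4|\rho_{12}|^2+(\rho_{11}-\rho_{22})^2=r_x^2+r_y^2+r_z^2,
\]
the squared length of the Bloch vector. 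Positivity of $\rho$ gives $\le 1$ immediately.

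\textbf{Reduction to pure states for $d=3$.} Both $\mathcal{W}_{\rm tr}$ and $\mathcal{P}_{\rm tr}$ are nonnegative and convex (Theorems \ref{thm1} and \ref{thm2}). Therefore $\rho\mapsto\sqrt{\mathcal{W}_{\rm tr}(\rho)^2+\mathcal{P}_{\rm tr}(\rho)^2}$ is convex, being the Euclidean norm, which is monotone on the nonnegative orthant, composed with a vector of nonnegative convex functions. Its maximum over the convex set of states is then attained at an extreme point, so it suffices to treat $\rho=|\psi\rangle\langle\psi|$ with $|\psi\rangle=\sum_i c_i|i\rangle$ and $p_i=|c_i|^2$.

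By Lemma \ref{Eq. lem-im} and the proof of Lemma \ref{Eq. thm1}, $\|\rho-\rho_{\rm diag}\|_{\rm tr}=2\lambda_+$, where $\lambda_+$ is the positive solution of $\sum_i p_i/(\lambda+p_i)=1$. For $d=3$, clearing denominators and using $\sum_i p_i=1$ turns this into the cubic
\[
g(\lambda)=\lambda^3-e_2\lambda-2e_3=0,\qquad e_2=\sum_{i<j}p_ip_j,\quad e_3=p_1p_2p_3 .
\]
The relevant root $\lambda_+$ is the largest real root of $g$. We have $\mathcal{W}_{\rm tr}=\frac32\lambda_+$ and $\mathcal{P}_{\rm tr}=\frac34\sum_i|p_i-\frac13|$, so the claim is equivalent to $\lambda_+\le\mu:=\frac23\sqrt{1-\mathcal{P}_{\rm tr}^2}$. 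Since $g$ is increasing to the right of its largest root, it suffices to show $g(\mu)\ge 0$ and $\mu^2\ge e_2/3$. The second condition places $\mu$ beyond the local maximum of $g$, so that $g(\mu)\ge0$ really forces $\mu\ge\lambda_+$.

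\textbf{Main obstacle: the polynomial inequality.} Order $p_1\ge p_2\ge p_3$ and split into two cases.
\begin{itemize}
\item If $p_2\le\frac13$, then $\sum_i|p_i-\frac13|=2(p_1-\frac13)$.
\item If $p_2\ge\frac13$, then $\sum_i|p_i-\frac13|=2(\frac13-p_3)$.
\end{itemize}
In each case $\mathcal{P}_{\rm tr}$ is linear in a single variable, and $g(\mu)\ge0$ becomes an explicit inequality in $(p_1,p_2,p_3)$ on a sub-simplex. I would first try to show that, for fixed $\mathcal{P}_{\rm tr}$ (that is, fixed $p_1$ in the first case), the quantity $\mu^3-e_2\mu-2e_3$ is minimized at an endpoint of the admissible segment: either $p_2=p_3$, or $p_2=\frac13$ (respectively $p_3=0$). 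This is plausible because $e_2$ and $e_3$ are quadratic and cubic in the remaining free variable. On those one-parameter boundary curves the inequality is a one-variable polynomial check, with equality expected at the maximally coherent state and at the basis states.

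This endpoint-reduction step is the part I expect to be hardest. If it fails, the fallback is a direct Lagrange-multiplier or Sturm-type analysis of $g(\mu)$ over each triangle.
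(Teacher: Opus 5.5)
Your qubit computation, the reduction to pure states, and the cubic $g(\lambda)=\lambda^3-e_2\lambda-2e_3$ are all correct. This cubic is exactly the characteristic polynomial of $\rho-\rho_{\rm diag}$, i.e.\ the paper's cubic with $c_2=-e_2$ and $c_3=2e_3$.

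The gap is that the qutrit argument stops exactly where the content lies. You never prove $g(\mu)\ge 0$, nor $\mu^2\ge e_2/3$. The reason you give for the endpoint reduction, that $e_2,e_3$ are quadratic and cubic in the free variable, does not imply that the minimum sits at an endpoint. As written, the qutrit case is a plan, not a proof. Two smaller corrections:
\begin{itemize}
\item $\sqrt{e_2/3}$ is the local \emph{minimum} of $g$, not the local maximum.
\item The condition $\mu^2\ge e_2/3$ is superfluous. By Lemma~\ref{Eq. lem-im} the other two roots of $g$ are $\le 0$, so $\mu>0$ together with $g(\mu)\ge0$ already forces $\mu\ge\lambda_+$.
\end{itemize}

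The missing step closes easily with the right variable. In the case $p_2\le\frac13$, fix $p_1$ (hence $\mu$) and set $u=p_2p_3$. Then $e_2=p_1(1-p_1)+u$ and $e_3=p_1u$, so
\[
g(\mu)=\mu^3-p_1(1-p_1)\mu-u\,(\mu+2p_1)
\]
is affine and decreasing in $u$. It is therefore minimized at $u=(1-p_1)^2/4$, i.e.\ at $p_2=p_3$, and the endpoint $p_2=\frac13$ never needs checking. On $p=(1-2t,t,t)$ one has $1-\mathcal{P}_{\rm tr}^2=3e_2$, so $\mu^2=\frac43e_2$ and $g(\mu)=\frac13e_2\mu-2e_3$. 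Thus $g(\mu)\ge0$ is Maclaurin's inequality $e_2^3\ge27e_3^2$. Explicitly, $(2-3t)^3-27t(1-2t)^2=135(t-\frac13)^2(\frac{8}{15}-t)\ge0$ on $[0,\frac13]$. The case $p_2\ge\frac13$ is identical with $p_3$ fixed and $u=p_1p_2$. It reduces to $p=(t,t,1-2t)$ and gives the same polynomial on $[\frac13,\frac12]$.

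Compare this with the paper, which never solves the cubic. Tracelessness and a single positive eigenvalue give $2e_2=\sum_i\lambda_i^2\ge\frac32\lambda_1^2$, i.e.\ $\mathcal{W}_{\rm tr}^2\le3e_2$. Then $\frac43e_2+(p_1-\frac13)^2\le\frac49$ follows from the same bound $p_2p_3\le(1-p_1)^2/4$. Since $e_2^3\ge27e_3^2$ is equivalent to $\lambda_+^2\le\frac43e_2$, your completed argument rests on the same two facts with extra work. Its only gain is that the equality cases $t=0$ and $t=\frac13$ fall directly out of the factorization.
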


\begin{proof}
    For any qubit state $\rho$, we express it with Bloch representation
\[
\begin{gathered}
\rho=\frac{1}{2}(I+\vec{r}\cdot\vec{\sigma})=\frac{1}{2}
\begin{pmatrix}
1+r_{z} & r_{x}-ir_{y}  \\
r_{x}+ir_{y} & 1-r_{z} \\
\end{pmatrix}
\end{gathered},
\]
where $\vec{r}=(r_{x},r_{y},r_{z})\in \mathbb{R}^{3}$  and $\vec{\sigma}=(\sigma_x,\sigma_y,\sigma_z)$ with three Pauli operators, $\sigma_x=|0\rangle\langle 1| + |1\rangle\langle 0|$, $\sigma_y={\rm{i} }(|1\rangle\langle 0|-|0\rangle\langle 1|)$, $\sigma_z=|0\rangle\langle 0| - |1\rangle\langle 1|$.
By direct calculation, we have
$\mathcal{W_{\rm{tr}}}(\rho)^{2}=||\rho-\rho_{\rm{diag}}||_{\rm{tr}}^{2}=r_{x}^{2}+r_{y}^{2}$ and $\mathcal{P_{\rm{tr}}}(\rho)^{2}=||\rho_{\rm{diag}}-\frac{I}{d}||_{\rm{tr}}^{2}=r_{z}^{2}$, which gives rise to
 \begin{equation}\label{eq upper 1}
     \mathcal{W_{\rm{tr}}}(\rho)^{2}+\mathcal{P_{\rm{tr}}}(\rho)^{2}=|\vec{r}|^{2}\leq 1
 \end{equation}
with $|\vec{r}|^{2}=r_{x}^{2}+r_{y}^{2}+r_{z}^{2}$. The proof for qutrit case is more technical and we leave it  in Appendix \ref{app:E}. 
\end{proof}

For qubit systems, Eq. (\ref{eq bound1}) reduces to an equality if and only if the quantum state is pure, which is consistent with the case without a detector in Ref. \cite{Englert}. 
In qutrit systems, the equality in Eq. (\ref{eq bound1}) holds if and only if $\rho = |\psi\rangle\langle\psi|$ with $|\psi\rangle = \frac{1}{\sqrt{3}} \sum_i e^{{\rm i}\theta_i} |i\rangle$, {or $\rho$ is basis-state, $\rho = |i\rangle\langle i|$, $i=1,2,3$.}
Therefore the wave-particle duality in
Theorem \ref{pro-three} is tight. One may conjecture that this duality in Eq.  (\ref{eq bound1})  might hold in any finite dimension. However, it is not true.
By 2000 randomly generated density matrices,  we find that the range of coordinates $(\mathcal{P_{\rm{tr}}}, \mathcal{W_{\rm{tr}}})$ exceeds the quarter-circle region $\mathcal{P_{\rm{tr}}}^{2}+\mathcal{W_{\rm{tr}}}^{2}\leq 1$  at $d = 6$ (See the yellow triangles in subfigure (c) in Fig. \ref{fig WP}). 
In fact, Englert $et~al$. have already discussed the violation of the standard duality  in Ref. \cite{EK} previously. The reason for this phenomenon is that the shape of the border line that limiting the coordinates $(\mathcal{P_{\rm{tr}}}, \mathcal{W_{\rm{tr}}})$ depends on the particular choice for wave and particle measures.

\begin{figure}[htbp]
\centering

\begin{minipage}{0.30\textwidth}
    \centering
    \includegraphics[width=\linewidth]{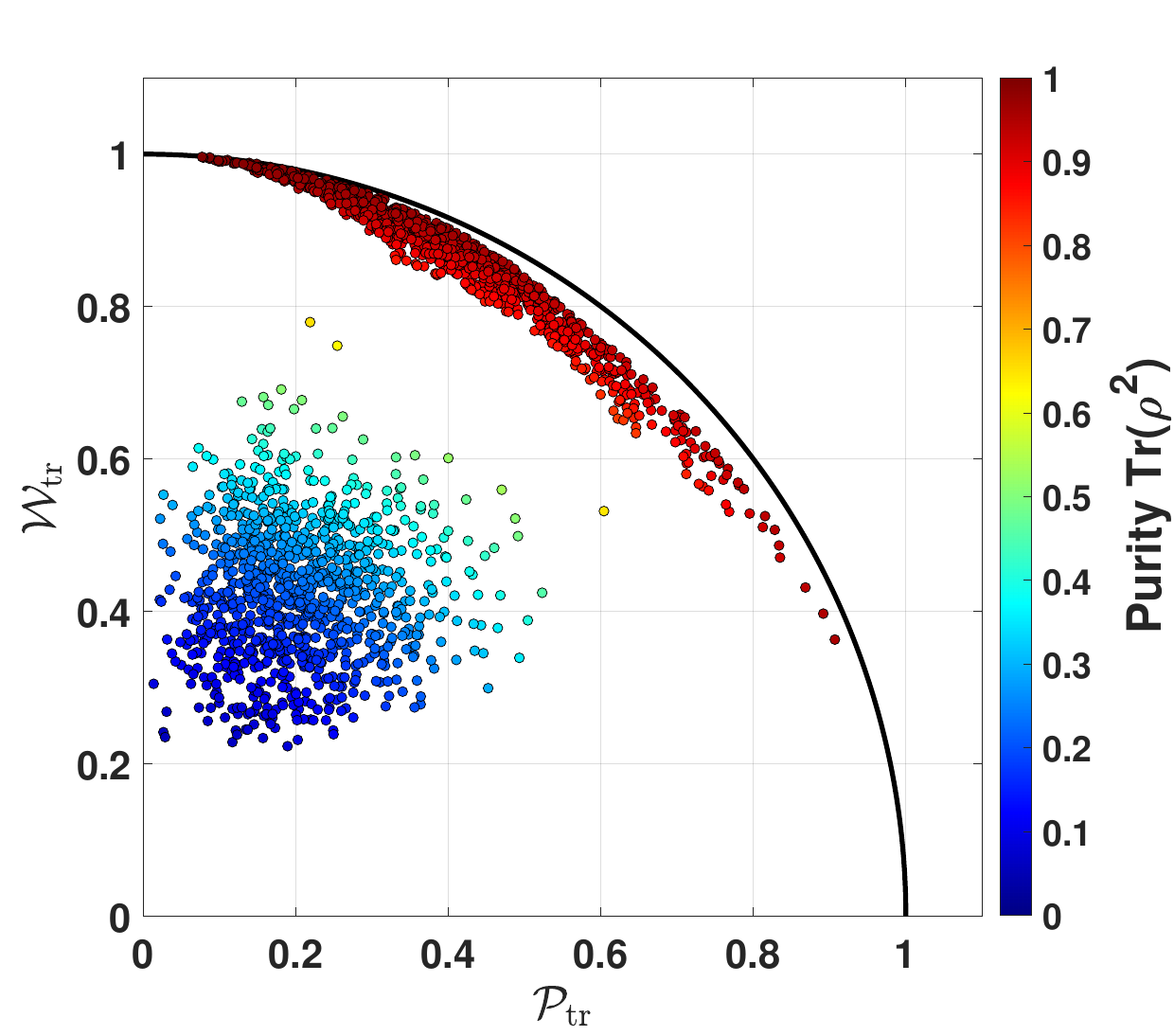}
    \\ (a)
\end{minipage}
\hfill
\begin{minipage}{0.30\textwidth}
    \centering
    \includegraphics[width=\linewidth]{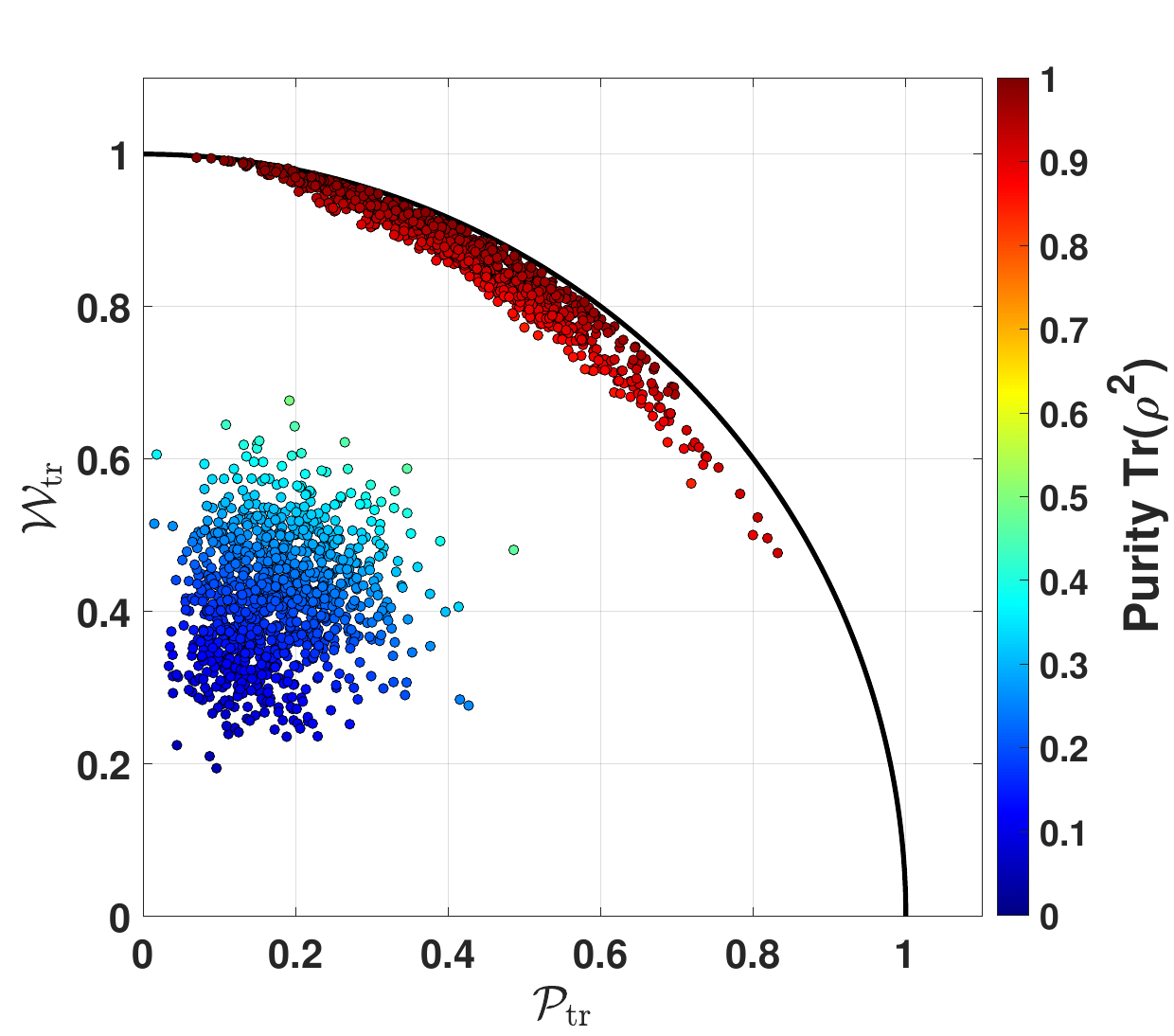}
    \\ (b)
\end{minipage}
\hfill
\begin{minipage}{0.30\textwidth}
    \centering
    \includegraphics[width=\linewidth]{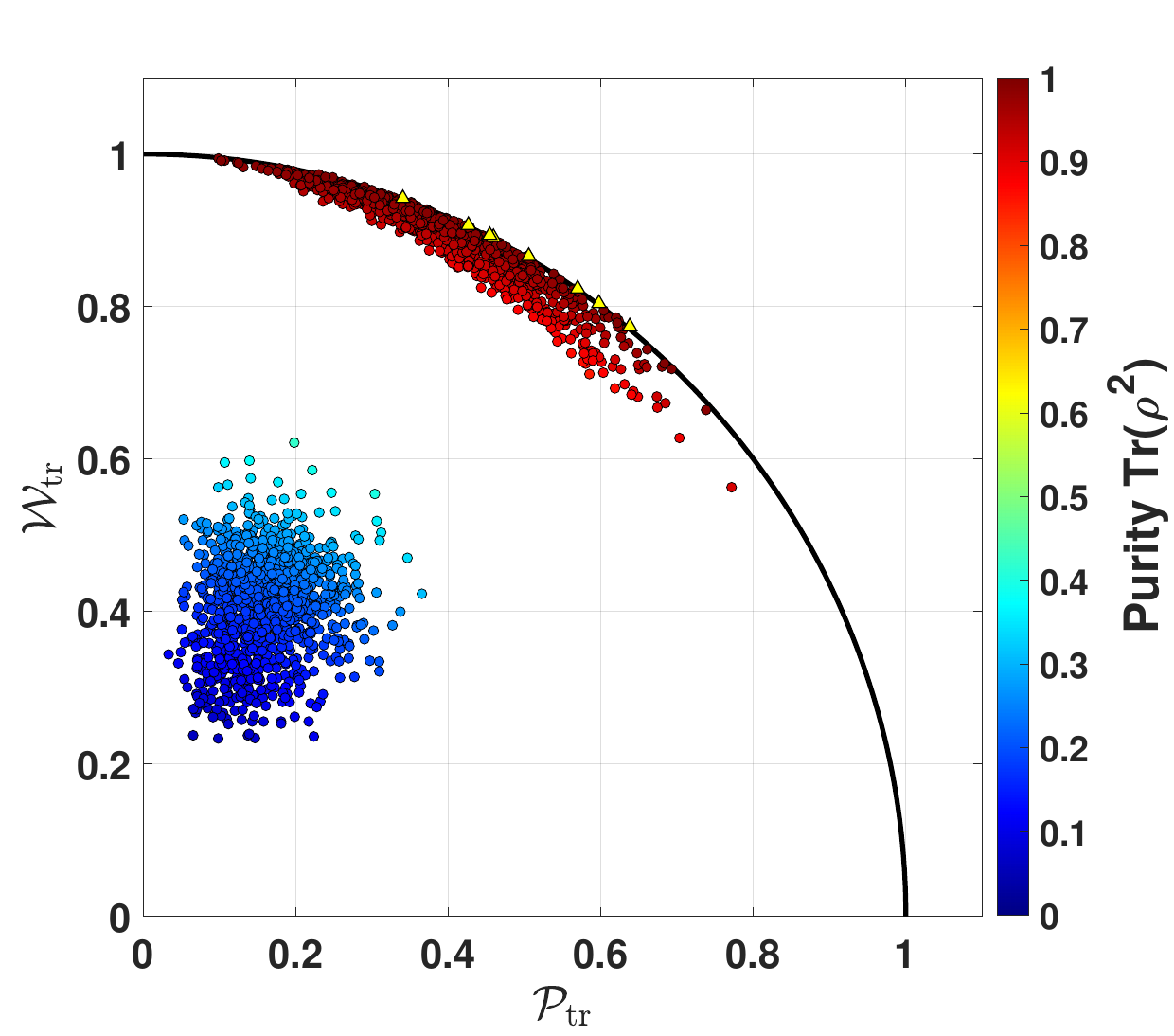}
    \\ (c)
\end{minipage}
\caption{(Color online)  Subfigures (a), (b), (c) display the $\mathcal{W_{\rm{tr}}}-\mathcal{P_{\rm{tr}}}$ diagrams for dimensions $d=4,5,6$, respectively. The black curve is the quarter-circle. The circular dots represent randomly generated quantum states, with their colors indicating the purity according to the color bar on the right. In subfigure (c), the yellow triangles represent the quantum states exceeding the black boundary. In another words, these quantum states violates the inquality $\mathcal{W_{\rm{tr}}}(\rho)^{2} + \mathcal{P_{\rm{tr}}}(\rho)^{2} \leq 1$. }
\label{fig WP}
\end{figure}

More specifically,  in a two-path interferometer,  Eq. (\ref{eq upper 1}) can be rephrased as a complete wave-particle-mixedness triality
  \begin{equation}\label{Eq. cor}
       \mathcal{W_{\rm{tr}}}(\rho)^{2}+\mathcal{P_{\rm{tr}}}(\rho)^{2} + 2( 1- {\rm{Tr}}(\rho^{2}))= 1,
    \end{equation}
the term $2( 1- {\rm{Tr}}(\rho^{2}))$ representing the mixedness provides the missing parts in Theorem \ref{pro-three} for qubit systems.

\subsection{The wave-particle duality in high dimensional systems}

Now we explore the  wave-particle duality  in high dimensional systems and we denote $r(A)$ the rank of matrix $A$.

\begin{thm}(Upper bound 1.)\label{Eq. thm-wp}
    For any $d$-dimensional quantum state $\rho$, we have wave-particle duality 
    \begin{equation}\label{Eq. wp}
 \mathcal{W_{\rm{tr}}}(\rho)^{2}+\mathcal{P_{\rm{tr}}}(\rho)^{2}\leq \frac{rd^{2}}{4(d-1)^{2}}({\rm{Tr}}(\rho^{2})-\frac{1}{d}),
    \end{equation}
 where $r=\max\{r_1,\ r_2\}$, $r_1=r(\rho-\rho_{\rm{diag}})$, and  $r_2= r(\rho_{\rm{diag}}-\frac{I}{d})$.
\end{thm}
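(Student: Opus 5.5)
The plan is to compare the trace norm with the Hilbert--Schmidt norm through the rank, and then to use the Pythagorean decomposition of $\rho-\frac{I}{d}$ into its off-diagonal and diagonal parts.

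For any Hermitian matrix $A$ with rank $k$, the Cauchy--Schwarz inequality applied to its $k$ nonzero eigenvalues gives
\[
\|A\|_{\rm tr}^{2}=\Big(\sum_{j=1}^{k}|\lambda_j|\Big)^{2}\leq k\sum_{j}\lambda_j^{2}=k\,\|A\|_{2}^{2},
\]
where $\|A\|_2^2={\rm Tr}(A^2)$. We apply this to $A_1=\rho-\rho_{\rm diag}$ (rank $r_1$) and to $A_2=\rho_{\rm diag}-\frac{I}{d}$ (rank $r_2$). Bounding both $r_1$ and $r_2$ by $r=\max\{r_1,r_2\}$ gives
\[
\mathcal{W}_{\rm tr}(\rho)^{2}+\mathcal{P}_{\rm tr}(\rho)^{2}\leq \frac{d^{2}}{4(d-1)^{2}}\,r\left(\|A_1\|_2^2+\|A_2\|_2^2\right).
\]

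The second step is the orthogonality identity. The matrix $A_1$ has zero diagonal, while $A_2$ is diagonal, so ${\rm Tr}(A_1A_2)=0$. Hence
\[
\|A_1\|_2^2+\|A_2\|_2^2=\|A_1+A_2\|_2^2=\Big\|\rho-\frac{I}{d}\Big\|_2^2 .
\]
Expanding the last term gives ${\rm Tr}(\rho^2)-\frac{2}{d}{\rm Tr}\rho+\frac{1}{d}={\rm Tr}(\rho^2)-\frac{1}{d}$. Substituting this into the previous display yields exactly inequality (\ref{Eq. wp}).

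There is no real obstacle. The only points that need care are the rank-based Cauchy--Schwarz step, which holds because a Hermitian matrix has exactly its rank many nonzero eigenvalues, and checking the cross term ${\rm Tr}(A_1A_2)=\sum_i (A_1)_{ii}(A_2)_{ii}=0$. The trivial cases $A_1=0$ or $A_2=0$ are consistent with the bound, since a rank of zero just removes the corresponding term.
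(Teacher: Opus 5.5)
Your proof is correct and follows the paper's argument: apply the rank bound $\|A\|_{\mathrm{tr}}\le\sqrt{r(A)}\,\|A\|_{\mathrm{HS}}$ to $\rho-\rho_{\mathrm{diag}}$ and to $\rho_{\mathrm{diag}}-\frac{I}{d}$, replace both ranks by $r$, and sum the two Hilbert--Schmidt terms to get ${\rm Tr}(\rho^{2})-\frac{1}{d}$. You also justify that last equality through the vanishing cross term ${\rm Tr}(A_1A_2)=0$, a step the paper uses without comment.
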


\begin{proof}
According to the relation between the trace norm and Hilbert-Schmidt norm, $||A||_{\rm{tr}}\leq \sqrt{r(A)}||A||_{\rm{HS}}$, where  $||A||_{\rm{HS}}=({\rm {Tr}}(A^{\dag}A))^{\frac{1}{2}}$,  we derive that
   $||\rho-\rho_{\rm{diag}}||_{\rm{tr}}^{2}\leq r_{1}||\rho-\rho_{\rm{diag}}||_{\rm {HS}}^{2}$ and 
    $||\rho_{\rm{diag}}-\frac{I}{d}||_{\rm{tr}}^{2}\leq r_{2}||\rho_{\rm{diag}}-\frac{I}{d}||_{\rm {HS}}^{2}$. Subsequently, we have
    \begin{align*}
       \mathcal{W_{\rm{tr}}}(\rho)^{2}+\mathcal{P_{\rm{tr}}}(\rho)^{2}&= \frac{d^{2}}{4(d-1)^{2}}||\rho-\rho_{\rm{diag}}||_{\rm{tr}}^{2}+\frac{d^{2}}{4(d-1)^{2}}||\rho_{\rm{diag}}-\frac{I}{d}||_{\rm{tr}}^{2}\\
       &\leq \max\{r_{1},r_2\} \frac{d^{2}}{4(d-1)^{2}}(||\rho-\rho_{\rm{diag}}||_{\rm {HS}}^{2}+||\rho_{\rm{diag}}-\frac{I}{d}||_{\rm {HS}}^{2})\\
       &=\frac{rd^{2}}{4(d-1)^{2}}({\rm{Tr}}(\rho^{2})-\frac{1}{d}).
    \end{align*}
\end{proof}

For a two-path interferometer, the parameter \(r\) admits only two possible values, namely \(0\) and \(2\). When \(r=0\), this is equivalent to \(\rho = I/2\), under which  both \(\mathcal{P}_{\mathrm{tr}}(\rho)\) and \(\mathcal{W}_{\mathrm{tr}}(\rho)\) reach their minimum \(0\). This is the trivial case. When \(r=2\), the inequality in  Eq. (\ref{Eq. wp}) yields to equality in Eq. (\ref{Eq. cor}). Therefore, the wave-particle duality in Theorem \ref{Eq. thm-wp} is tight for the two-path interferometer.

\begin{thm}(Upper bound 2.)\label{thm-rank}
    For any $d$-dimensional quantum state $\rho$, we have wave-particle duality 
    \begin{equation}\label{}
 \mathcal{W_{\rm{tr}}}(\rho)^{2}+\mathcal{P_{\rm{tr}}}(\rho)^{2}\leq \frac{R^{'}d^{2}}{(d-1)^{2}}({\rm{Tr}}(\rho^{2})-\frac{1}{d}),
    \end{equation}
 where $R^{'}=\max\{R_{1},R_{2}\}$ with  $R_{1}=\frac{r(\rho)r(\rho_{\rm{diag}})}{r(\rho)+r(\rho_{\rm{diag}})}$ and $R_{2}=\frac{r(\rho_{\rm{diag}})d}{r(\rho_{\rm{diag}})+d}$.
\end{thm}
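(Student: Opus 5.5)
We follow the proof of Theorem~\ref{Eq. thm-wp}, but replace the rank of each difference matrix by a bound coming from the ranks of the two matrices being subtracted. The basic tool is a rank-aware comparison of the trace norm with the Hilbert--Schmidt norm, sharper than $\|A\|_{\rm tr}^2\le r(A)\|A\|_{\rm HS}^2$.

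\textbf{Step 1 (lemma).} We prove the following: for positive semidefinite $X,Y$ with $r(X)=a$ and $r(Y)=b$,
\begin{equation*}
\|X-Y\|_{\rm tr}^2\le \frac{4ab}{a+b}\,\|X-Y\|_{\rm HS}^2 .
\end{equation*}
To do so, write the Jordan decomposition $X-Y=A_+-A_-$ with $A_\pm\ge0$ having orthogonal supports. The positive part $A_+$ lives on the positive eigenspace of $X-Y$. On that eigenspace $X-Y\le X$ in the compressed sense, so Weyl/Courant--Fischer gives $r(A_+)\le r(X)=a$. Similarly $r(A_-)\le b$.

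Set $t_\pm={\rm Tr}A_\pm$. Cauchy--Schwarz on each part gives
\begin{equation*}
\|X-Y\|_{\rm HS}^2\ge \frac{t_+^2}{a}+\frac{t_-^2}{b}.
\end{equation*}
Minimizing the right-hand side subject to $t_++t_-=\|X-Y\|_{\rm tr}$ gives
\begin{equation*}
\frac{t_+^2}{a}+\frac{t_-^2}{b}\ge \frac{(t_++t_-)^2}{a+b}.
\end{equation*}
This alone yields only the factor $a+b$. To get the harmonic-type factor $4ab/(a+b)$, we use the trace constraint: since ${\rm Tr}(X-Y)=0$ for states, $t_+=t_-=\|X-Y\|_{\rm tr}/2$. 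Then
\begin{equation*}
\|X-Y\|_{\rm HS}^2\ge \frac{\|X-Y\|_{\rm tr}^2}{4}\left(\frac1a+\frac1b\right),
\end{equation*}
which is exactly $\|X-Y\|_{\rm tr}^2\le \frac{4ab}{a+b}\|X-Y\|_{\rm HS}^2$.

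\textbf{Step 2 (apply the lemma).} For the wave part take $X=\rho$, $Y=\rho_{\rm diag}$; both have unit trace. This gives
\begin{equation*}
\|\rho-\rho_{\rm diag}\|_{\rm tr}^2\le 4R_1\|\rho-\rho_{\rm diag}\|_{\rm HS}^2 .
\end{equation*}
For the particle part take $X=\rho_{\rm diag}$, $Y=I/d$, with ranks $r(\rho_{\rm diag})$ and $d$. This gives
\begin{equation*}
\|\rho_{\rm diag}-I/d\|_{\rm tr}^2\le 4R_2\|\rho_{\rm diag}-I/d\|_{\rm HS}^2 .
\end{equation*}

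\textbf{Step 3 (combine).} Multiply both bounds by $\frac{d^2}{4(d-1)^2}$, bound each of $R_1,R_2$ by $R'=\max\{R_1,R_2\}$, and add. The Hilbert--Schmidt terms combine via the Pythagorean identity already used in Theorem~\ref{Eq. thm-wp}:
\begin{equation*}
\|\rho-\rho_{\rm diag}\|_{\rm HS}^2+\|\rho_{\rm diag}-I/d\|_{\rm HS}^2={\rm Tr}\rho^2-\frac1d .
\end{equation*}
This identity holds because the off-diagonal and diagonal parts are Hilbert--Schmidt orthogonal. The result is
\begin{equation*}
\mathcal{W}_{\rm tr}(\rho)^2+\mathcal{P}_{\rm tr}(\rho)^2\le \frac{R'd^2}{(d-1)^2}\left({\rm Tr}\rho^2-\frac1d\right),
\end{equation*}
which is the claimed bound.

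\textbf{Main obstacle.} The hardest step is the rank bound $r(A_+)\le r(X)$ in Step 1, which must be justified cleanly. We will argue as follows. Suppose $A_+$ had rank exceeding $a$. Then its support, of dimension greater than $a$, meets $\ker X$ nontrivially, so there is a nonzero vector $v$ in that intersection. On the support of $A_+$ we have $\langle v,(X-Y)v\rangle>0$. But $v\in\ker X$ gives $\langle v,(X-Y)v\rangle=-\langle v,Yv\rangle\le0$, a contradiction. The same argument with the roles of $X$ and $Y$ swapped gives $r(A_-)\le b$.
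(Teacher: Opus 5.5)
Your proof is correct and has the same structure as the paper's: apply $\|X-Y\|_{\rm tr}^2\le \frac{4r(X)r(Y)}{r(X)+r(Y)}\|X-Y\|_{\rm HS}^2$ to the pairs $(\rho,\rho_{\rm diag})$ and $(\rho_{\rm diag},I/d)$, bound $R_1,R_2$ by $R'$, and combine the Hilbert--Schmidt terms via the identity $\|\rho-\rho_{\rm diag}\|_{\rm HS}^2+\|\rho_{\rm diag}-I/d\|_{\rm HS}^2={\rm Tr}\rho^2-\frac1d$. The only difference is that the paper quotes this rank-dependent norm inequality from the literature, whereas you prove it yourself. Your derivation is valid and self-contained: it uses the Jordan decomposition, the kernel-intersection argument for $r(A_+)\le r(X)$ and $r(A_-)\le r(Y)$, and $t_+=t_-$, which follows from equal traces and holds in both applications.
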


\begin{proof}
For any two quantum states $\rho$ and $\sigma$, 
based on the relation between trace norm and Hilbert-Schmidt norm given in Ref. \cite{strong}, 
we have $||\rho-\sigma||_{\rm{tr}}^{2}\leq 4R||\rho-\sigma||_{\rm{HS}}^{2}$, where $R=\frac{r(\rho)r(\sigma)}{r(\rho)+r(\sigma)}$. Thus,
\begin{align*}
       \mathcal{W_{\rm{tr}}}(\rho)^{2}+\mathcal{P_{\rm{tr}}}(\rho)^{2}&= \frac{d^{2}}{4(d-1)^{2}}||\rho-\rho_{\rm{diag}}||_{\rm{tr}}^{2}+\frac{d^{2}}{4(d-1)^{2}}||\rho_{\rm{diag}}-\frac{I}{d}||_{\rm{tr}}^{2}\\
       &\leq \max\{R_{1},R_2\} \frac{d^{2}}{(d-1)^{2}}(||\rho-\rho_{\rm{diag}}||_{\rm {HS}}^{2}+||\rho_{\rm{diag}}-\frac{I}{d}||_{\rm {HS}}^{2})\\
       &=\frac{R^{'}d^{2}}{(d-1)^{2}}({\rm{Tr}}(\rho^{2})-\frac{1}{d}).
    \end{align*}
\end{proof}

Up to now, we have derived two wave-particle dualities by utilizing the connection between the trace norm and the Hilbert-Schmidt norm. Next we present the third wave-particle duality based on the relation between trace norm and linear entropy.

\begin{thm}(Upper bound 3.)\label{wp-sl}
     For any $d$-dimensional quantum state $\rho$, we have wave-particle duality 
    \begin{equation}
 \mathcal{W_{\rm{tr}}}(\rho)^{2}+\mathcal{P_{\rm{tr}}}(\rho)^{2}\leq \frac{d^{2}}{(d-1)^{2}}[2-\frac{1}{d}-{\rm{Tr}}(\rho_{\rm{diag}}^{2})].
    \end{equation}
\end{thm}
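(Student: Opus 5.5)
The plan is to bound the two trace norms separately through the Fuchs--van de Graaf inequality, replacing the fidelity by the overlap $\mathrm{Tr}(\rho\sigma)$. By definition of the normalized measures,
\[
\mathcal{W_{\rm{tr}}}(\rho)^{2}+\mathcal{P_{\rm{tr}}}(\rho)^{2}=\frac{d^{2}}{4(d-1)^{2}}\Big(||\rho-\rho_{\rm{diag}}||_{\rm{tr}}^{2}+||\rho_{\rm{diag}}-\tfrac{I}{d}||_{\rm{tr}}^{2}\Big).
\]
So it suffices to show
\[
||\rho-\rho_{\rm{diag}}||_{\rm{tr}}^{2}+||\rho_{\rm{diag}}-\tfrac{I}{d}||_{\rm{tr}}^{2}\leq 4\Big[2-\tfrac{1}{d}-{\rm Tr}(\rho_{\rm{diag}}^{2})\Big].
\]

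\textbf{General bound.} For any two states $\rho,\sigma$ I will use the Fuchs--van de Graaf inequality $T(\rho,\sigma)\leq\sqrt{1-F(\rho,\sigma)}$, with $F(\rho,\sigma)=(\mathrm{Tr}|\sqrt{\rho}\sqrt{\sigma}|)^{2}$. I then lower-bound the fidelity by the overlap. Put $X=\sqrt{\rho}\sqrt{\sigma}$. Since $||X||_{\rm tr}\geq ||X||_{\rm HS}$, we get $F=||X||_{\rm tr}^{2}\geq ||X||_{\rm HS}^{2}={\rm Tr}(X^{\dagger}X)={\rm Tr}(\rho\sigma)$. Combining the two facts gives
\[
||\rho-\sigma||_{\rm tr}^{2}=4T(\rho,\sigma)^{2}\leq 4\big(1-{\rm Tr}(\rho\sigma)\big).
\]

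\textbf{The two pieces.} Take $\sigma=\rho_{\rm diag}$ first. Because $\Delta$ is self-adjoint and idempotent, ${\rm Tr}(\rho\,\rho_{\rm diag})={\rm Tr}(\Delta(\rho)\rho_{\rm diag})={\rm Tr}(\rho_{\rm diag}^{2})$. This gives
\[
||\rho-\rho_{\rm diag}||_{\rm tr}^{2}\leq 4\big(1-{\rm Tr}(\rho_{\rm diag}^{2})\big).
\]
Next apply the same bound to the pair $(\rho_{\rm diag}, I/d)$. Here ${\rm Tr}(\rho_{\rm diag}I/d)=1/d$, so
\[
||\rho_{\rm diag}-\tfrac{I}{d}||_{\rm tr}^{2}\leq 4\big(1-\tfrac{1}{d}\big).
\]
Adding the two inequalities and multiplying by $\frac{d^{2}}{4(d-1)^{2}}$ gives exactly the claimed bound $\frac{d^{2}}{(d-1)^{2}}[2-\frac{1}{d}-{\rm Tr}(\rho_{\rm diag}^{2})]$.

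\textbf{Main obstacle.} The only nontrivial step is justifying $F(\rho,\sigma)\geq{\rm Tr}(\rho\sigma)$ for mixed states. I would state it explicitly via the norm comparison $||X||_{\rm tr}\geq||X||_{\rm HS}$ above, which holds because the singular values of $X$ satisfy $(\sum_k s_k)^{2}\geq\sum_k s_k^{2}$. The rest is bookkeeping with the normalization constants.
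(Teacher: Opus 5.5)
Your proof is correct and is essentially the paper's argument. The paper bounds each trace norm separately using the cited inequality $\|\rho-\sigma\|_{\rm tr}^{2}\le 2\left(\|\rho-\sigma\|_{\rm HS}^{2}+S_L(\rho)+S_L(\sigma)\right)$, whose right-hand side expands to exactly your $4\left(1-\mathrm{Tr}(\rho\sigma)\right)$, and then adds the bounds for the pairs $(\rho,\rho_{\rm diag})$ and $(\rho_{\rm diag},I/d)$ as you do. The only difference is that you derive this pairwise bound yourself from Fuchs--van de Graaf and $F\ge\mathrm{Tr}(\rho\sigma)$ instead of citing it, which makes your argument self-contained.
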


\begin{proof}
For any two quantum states $\rho$ and $\sigma$, it has $||\rho-\sigma||_{\rm{tr}}^{2}\leq 2(||\rho-\sigma||_{\rm{HS}}^{2}+S_{L}(\rho)+S_{L}(\sigma))$, where  $S_{L}(\rho)=1-{\rm{Tr}(\rho^{2})}$  is the linear entropy \cite{strong}. Thus,  
  \begin{align*}
       \mathcal{W_{\rm{tr}}}(\rho)^{2}+\mathcal{P_{\rm{tr}}}(\rho)^{2}&= \frac{d^{2}}{4(d-1)^{2}}||\rho-\rho_{\rm{diag}}||_{\rm{tr}}^{2}+\frac{d^{2}}{4(d-1)^{2}}||\rho_{\rm{diag}}-\frac{I}{d}||_{\rm{tr}}^{2}\\
       &\leq  \frac{d^{2}}{2(d-1)^{2}}(||\rho-\rho_{\rm{diag}}||_{\rm {HS}}^{2}+||\rho_{\rm{diag}}-\frac{I}{d}||_{\rm {HS}}^{2}+S_{L}(\rho)+2S_{L}({\rho_{\rm{diag}}})+S_{L}(\frac{I}{d}))\\
       &=\frac{d^{2}}{(d-1)^{2}}[2-\frac{1}{d}-{\rm{Tr}}(\rho_{\rm{diag}}^{2})].
    \end{align*}
\end{proof}

Here we show the fourth wave-particle duality based on the relation between the trace norm and the relative entropy.

 \begin{thm}(Upper bound 4.)\label{wp-S}
      For any $d$-dimensional quantum state $\rho$, we  have wave-particle duality 
    \begin{equation}
 \mathcal{W_{\rm{tr}}}(\rho)^{2}+\mathcal{P_{\rm{tr}}}(\rho)^{2}\leq \frac{\ln(2)d^{2}}{2(d-1)^{2}}(\log{d}-S(\rho)).
 \end{equation}
 \end{thm}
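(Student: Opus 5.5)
We will use Pinsker's inequality to turn the trace norms into relative entropies, and then observe that those relative entropies add up to a single one. The version we need is
\[
\|\rho-\sigma\|_{\rm tr}^{2}\leq 2\ln(2)\, S(\rho\|\sigma),
\]
where $S(\rho\|\sigma)={\rm Tr}\,\rho(\log\rho-\log\sigma)$ with logarithms in base $2$; this accounts for the factor $\ln(2)$ in the bound. We apply it twice: once with $\sigma=\rho_{\rm diag}$, and once with the pair $(\rho_{\rm diag}, I/d)$. This gives
\[
\mathcal{W}_{\rm tr}(\rho)^{2}+\mathcal{P}_{\rm tr}(\rho)^{2}\leq \frac{d^{2}}{4(d-1)^{2}}\,2\ln(2)\Big[S(\rho\|\rho_{\rm diag})+S(\rho_{\rm diag}\|\tfrac{I}{d})\Big].
\]

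Next we evaluate the two relative entropies explicitly. Since $\rho_{\rm diag}=\Delta(\rho)$ is diagonal in the path basis, we have ${\rm Tr}\,\rho\log\rho_{\rm diag}={\rm Tr}\,\rho_{\rm diag}\log\rho_{\rm diag}$. Hence
\[
S(\rho\|\rho_{\rm diag})=S(\rho_{\rm diag})-S(\rho),
\]
which is the standard relative-entropy coherence identity. The second term is $S(\rho_{\rm diag}\|I/d)=\log d-S(\rho_{\rm diag})$. Adding the two, the $S(\rho_{\rm diag})$ terms cancel and the sum is exactly $\log d-S(\rho)=S(\rho\|I/d)$. Substituting back, the prefactor becomes $\frac{d^{2}}{4(d-1)^2}\cdot 2\ln 2=\frac{\ln(2)d^{2}}{2(d-1)^{2}}$, which is the claimed bound.

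The main point to check is the constant convention in Pinsker's inequality, namely which norm and which logarithm base are used. For the trace norm $\|\cdot\|_{\rm tr}$, the natural-log form is $\frac12\|\rho-\sigma\|_{\rm tr}^2\le S_e(\rho\|\sigma)$. Converting to base $2$ via $S_e=\ln 2\cdot S_2$ gives $\|\rho-\sigma\|_{\rm tr}^2\le 2\ln2\, S_2(\rho\|\sigma)$, which matches the prefactor above. A second point is support: $S(\rho\|\rho_{\rm diag})$ is finite because ${\rm supp}\,\rho\subseteq{\rm supp}\,\rho_{\rm diag}$. If a diagonal entry vanishes, the corresponding row and column of $\rho$ vanish by positivity, so both relative entropies are well defined. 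Once these are settled, the rest is the short computation above.
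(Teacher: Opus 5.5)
Your proposal is correct and follows the same route as the paper: apply the quantum Pinsker inequality $\|\rho-\sigma\|_{\rm tr}^{2}\leq 2\ln(2)\,S(\rho\|\sigma)$ to the pairs $(\rho,\rho_{\rm diag})$ and $(\rho_{\rm diag},I/d)$, then use $S(\rho\|\rho_{\rm diag})+S(\rho_{\rm diag}\|\frac{I}{d})=\log d-S(\rho)$. You go a bit further than the paper by stating the base-$2$ convention and the support condition explicitly; the paper simply asserts the sum identity without that justification.
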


\begin{proof}
According to the quantum Pinsker inequality  $S(\rho||\sigma)\geq \frac{1}{2\ln(2)}||\rho-\sigma||_{\rm{tr}}^{2}$, we have  
  \begin{align*}
       \mathcal{W_{\rm{tr}}}(\rho)^{2}+\mathcal{P_{\rm{tr}}}(\rho)^{2}&= \frac{d^{2}}{4(d-1)^{2}}||\rho-\rho_{\rm{diag}}||_{\rm{tr}}^{2}+\frac{d^{2}}{4(d-1)^{2}}||\rho_{\rm{diag}}-\frac{I}{d}||_{\rm{tr}}^{2}\\
       &\leq  \frac{\ln(2)d^{2}}{2(d-1)^{2}}(S(\rho||\rho_{\text{diag}})+S(\rho_{\text{diag}}||\frac{I}{d}))\\
       &=\frac{\ln(2)d^{2}}{2(d-1)^{2}}(\log{d}-S(\rho)).
    \end{align*}
\end{proof}

 Theorems \ref{Eq. thm-wp}, \ref{thm-rank}, \ref{wp-sl}, and \ref{wp-S} provide four upper bounds for the wave–particle duality from the perspectives of the rank, purity, and entropy.
 {For them, the maximally mixed state \(\rho = I/d\) trivially saturates the inequalities in which  both waveness and particleness vanish simultaneously.}
Now we analyze these upper bounds on restricting the wave and particle
behaviors.
   {\rm Consider the $d$-dimensional quantum state 
    \begin{equation}\label{eq ex state}
        \rho = (1-\varepsilon)|0 \rangle \langle 0| + \varepsilon \sigma,  
    \end{equation}
    where $\sigma$ is a mixed state and $\varepsilon \in [0, 1]$. 
Apparently, increasing  $\varepsilon$ results in  a higher degree of mixing of $\rho$. For $d=2,3,4,5,6,7$, we generate $50$ random density matrices by sampling different values of $\varepsilon$ uniformly from the interval $[0,1]$. 
The behaviors of these upper bounds in Theorems \ref{Eq. thm-wp}, \ref{thm-rank}, \ref{wp-sl}, and \ref{wp-S} with respect to $\varepsilon$ are plotted in Fig. \ref{fg3}.
    
For $d=2$ (See subfigure (a)), the upper bound of Theorem \ref{Eq. thm-wp} coincides with the exact value $\mathcal{W_{\rm{tr}}}(\rho)^{2}+\mathcal{P_{\rm{tr}}}(\rho)^{2}$. For $\varepsilon=0$, the upper bound of Theorem \ref{thm-rank} is better than that of Theorem \ref{wp-S}, whereas for $0<\varepsilon \leq 1$, the upper bound of Theorem \ref{wp-S} becomes tighter than that of Theorem \ref{thm-rank}. 

For $d=3$ (See subfigure (b)), the upper bound in Theorem \ref{Eq. thm-wp} is the best for $\varepsilon<0.098236$ and $\varepsilon >0.707827$, whereas the upper bound in Theorem \ref{wp-S} is the best for $0.098236 \leq \varepsilon \leq 0.707827$. For $\varepsilon=0$, the upper bound of Theorem \ref{thm-rank} coincides with the upper bound of Theorem \ref{Eq. thm-wp}. Moreover, the upper bound in Theorem \ref{wp-sl} is better than that in Theorem \ref{thm-rank} when $0<\varepsilon \leq 0.104738$. 

For $d=4$ (See subfigure (c)),   the upper bound in Theorem \ref{wp-S} is the best one for $0 < \varepsilon\leq 0.943534$, while that  in Theorem \ref{Eq. thm-wp} is best for $\varepsilon \geq 0.943534$. For $\varepsilon=0$, the upper bound in Theorem \ref{thm-rank}  is the closest to the exact value and the upper bound in Theorem \ref{Eq. thm-wp} coincides with that in Theorem \ref{wp-sl}. Moreover, the upper bound in Theorem \ref{wp-sl} is better than that in Theorem \ref{thm-rank} when $0<\varepsilon \leq 0.186393$. 

Now we observe and analyze the behaviors in $d=5,6,7$ (See subfigures (d), (e), (f)). For $\varepsilon=0$, the upper bound in Theorem \ref{thm-rank}  is the closest to the exact value for $d=5, 6, 7$. The upper bound of Theorem \ref{wp-S} remains the best when $\varepsilon \in (0, 0.497688]$ for $d=5$ and this range of $\varepsilon$ expands as the dimension $d$ increases; however, the upper bound in Theorem \ref{Eq. thm-wp} is the best when $\varepsilon \in (0.497688,1]$ for $d=5$ and this range of $\varepsilon$ shrinks as the dimension $d$ increases. Meanwhile, the upper bound in Theorem \ref{wp-sl} is better than that in Theorem \ref{Eq. thm-wp} when $\varepsilon \in [0, 0.052661)$ for $d=5$, and this range of $\varepsilon$ expands as the dimension $d$ increases. The upper bound in Theorem \ref{wp-sl} is better than that in Theorem \ref{thm-rank} when $\varepsilon \in [0, 0.228316)$ for $d=5$, and this range of $\varepsilon$ expands as the dimension $d$ increases. For $d=6$,  the bound of Theorem \ref{wp-sl} becomes the tightest when $\varepsilon \in (0, 0.014759)$ and this range of $\varepsilon$ expands as the dimension $d$ increases.

{The numerical analysis shows that the tightness of Theorems \ref{Eq. thm-wp}–\ref{wp-S} varies with \(d\) and \(\varepsilon\) for the quantum state in Eq. (\ref{eq ex state}). The lower bound in Theorem \ref{Eq. thm-wp} approaches the exact  value for pure or highly mixed states and coincide with it at \(d=2\). The lower bound in Theorem \ref{thm-rank} is near the exact  value at \(\varepsilon=0\) and  it is weaker elsewhere.  The lower bound in Theorem \ref{wp-sl} outperforms \ref{Eq. thm-wp} or \ref{thm-rank} in some low-to-mid mixing regions which grows with \(d\). The lower bound in  Theorem \ref{wp-S} is close to the exact  value for intermediate \(\varepsilon\), and its favorable range expands with \(d\). Thus, the optimal bound should be selected adaptively based on both dimension and mixing degree.}
     
 \begin{figure}[htbp]
\centering

\begin{minipage}{0.30\textwidth}
    \centering
    \includegraphics[width=\linewidth]{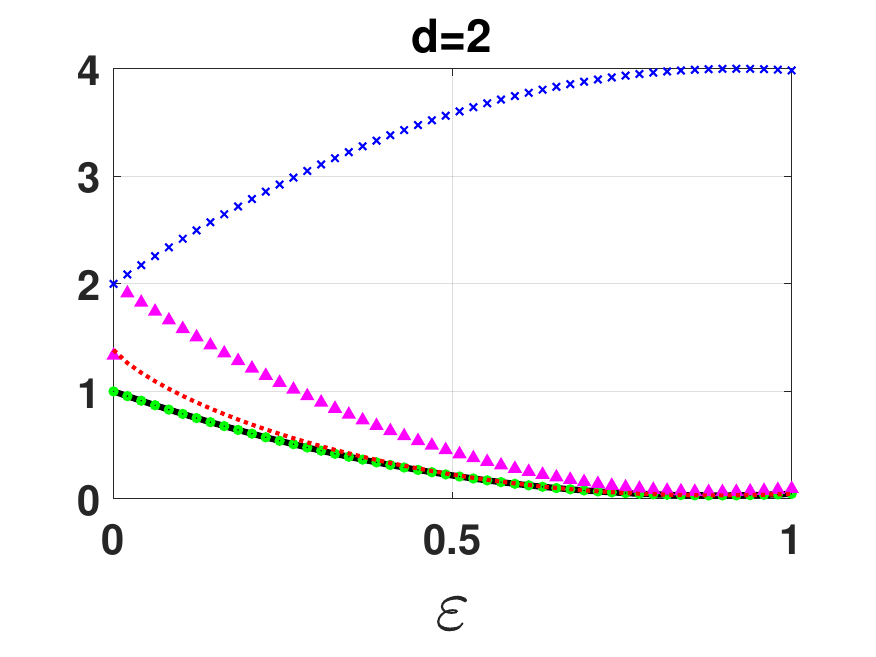}
    \\ (a)
\end{minipage}
\hfill
\begin{minipage}{0.30\textwidth}
    \centering
    \includegraphics[width=\linewidth]{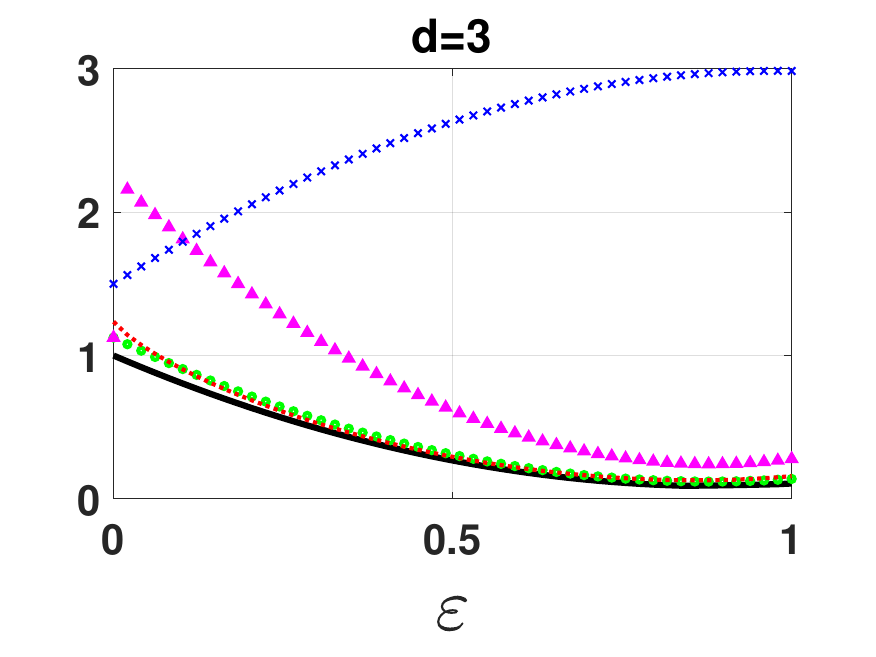}
    \\ (b)
\end{minipage}
\hfill
\begin{minipage}{0.30\textwidth}
    \centering
    \includegraphics[width=\linewidth]{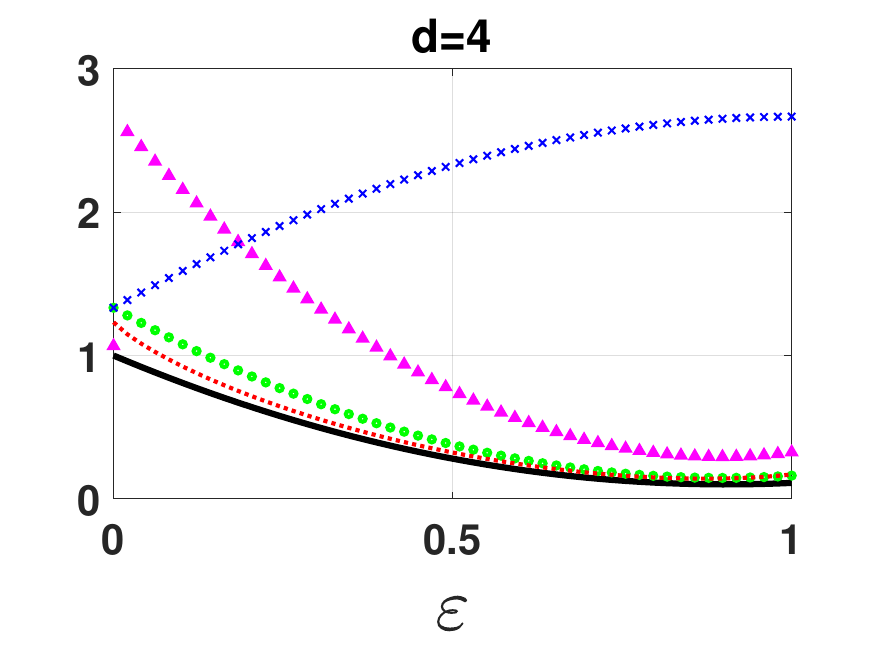}
    \\ (c)
\end{minipage}
\hfill
\begin{minipage}{0.30\textwidth}
    \centering
    \includegraphics[width=\linewidth]{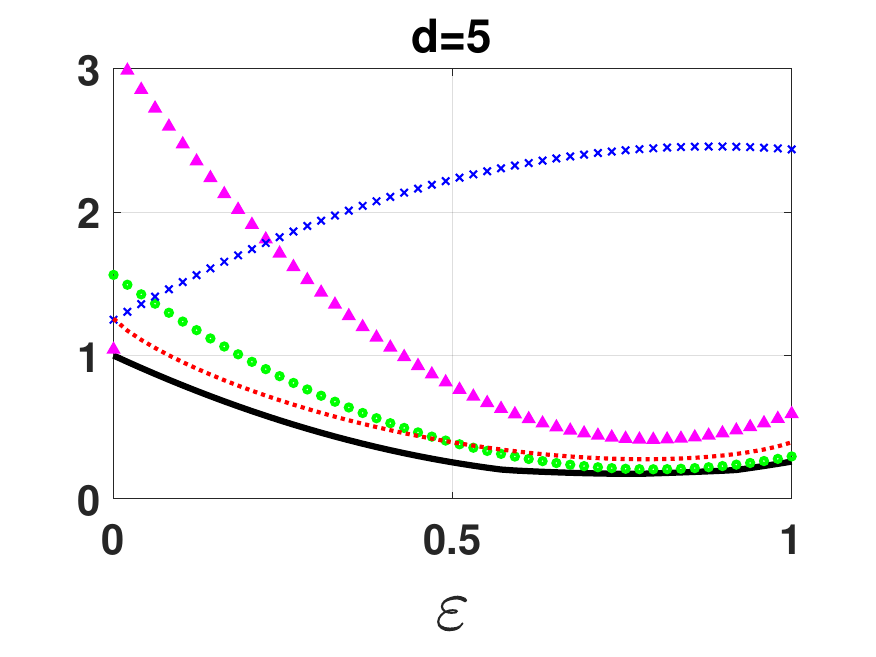}
    \\ (d)
\end{minipage}
\hfill
\begin{minipage}{0.30\textwidth}
    \centering
    \includegraphics[width=\linewidth]{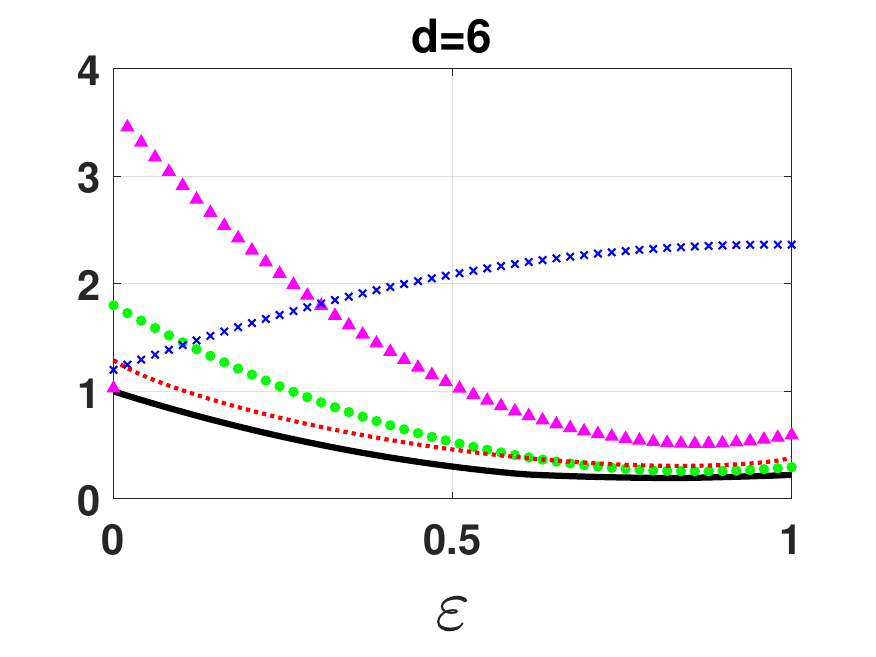}
    \\ (e)
\end{minipage}
\hfill
\begin{minipage}{0.30\textwidth}
    \centering
    \includegraphics[width=\linewidth]{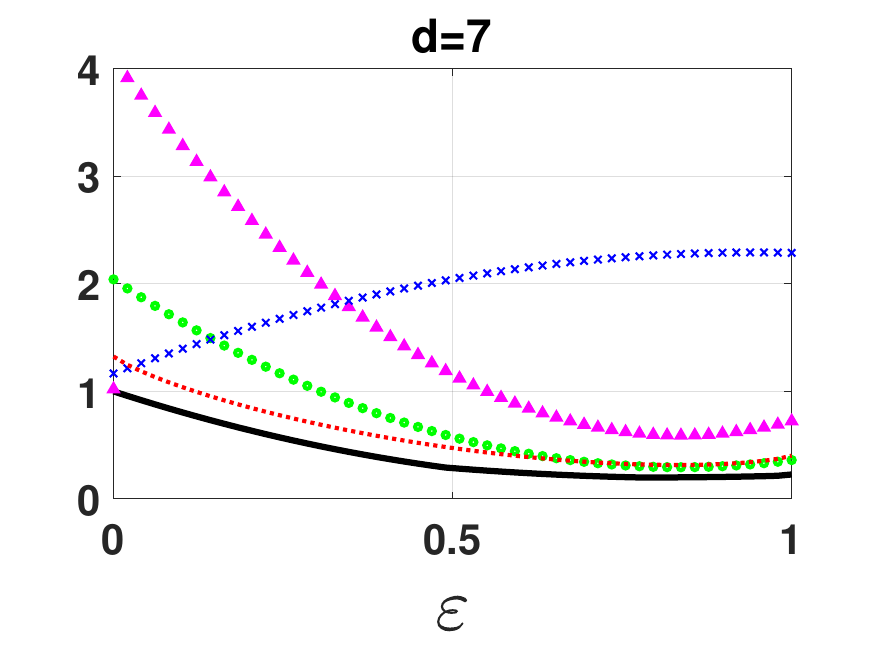}
    \\ (f)
\end{minipage}

\caption{(Color online)  
The illustration of the wave-particle dualities in Theorems \ref{Eq. thm-wp}, \ref{thm-rank}, \ref{wp-sl}, and \ref{wp-S} with respect to $\varepsilon$ for quantum states in Eq. (\ref{eq ex state}).
The black solid curve, green dotted curve, pink triangular curve, blue cross-shaped curve,  red dashed curve represent the 
exact value of $\mathcal{W_{\rm{tr}}}(\rho)^{2}+\mathcal{P_{\rm{tr}}}(\rho)^{2}$ and the upper bounds in Theorems \ref{Eq. thm-wp}, \ref{thm-rank}, \ref{wp-sl} and \ref{wp-S} respectively.
}
\label{fg3}
\end{figure}

\subsection{The  triality relation in a multi-path interferometer}\label{sub-C}

In this section, we mainly consider the complementarity relation among wave, particle, and mixedness. Similarly to the wave and particle measures in Theorems \ref{thm1} and \ref{thm2}, for any function $D$ defined in section II, for any quantum state $\rho$, we define
\begin{equation}
    \mathcal{M}(\rho)=C-D(\rho,\frac{I}{d}), \quad C=\max_\rho D(\rho,\frac{I}{d}).
\end{equation}
Then $\mathcal{M}(\rho)$ can be viewed as a quantifier of mixedness, with $0\leq \mathcal{M}(\rho)\leq C$. If we additionally require the function $D$ to satisfy the triangle inequality, then a natural linear complementarity relation for the waveness and particleness in Theorems \ref{thm1} and \ref{thm2} is
\begin{equation}\label{eq lower 1}
    \mathcal{W}(\rho) + \mathcal{P}(\rho) +\mathcal{M} (\rho)\geq C.
\end{equation}
Furthermore, by the mean inequality $a^{2}+b^{2}\geq \frac{1}{2}(a+b)^{2}$, we also obtain a quadratic complementarity relation
\begin{equation}\label{eq lower 2}
    \mathcal{W}^2(\rho) + \mathcal{P}^2(\rho)  \geq \frac{1}{2}(C-\mathcal{M} (\rho))^2.
\end{equation}
The complementarity relations in Eqs. (\ref{eq lower 1}) and (\ref{eq lower 2}) imply that the waveness, particleness, and mixedness cannot all be small simultaneously.

In fact, if we specify the function $D$ to be the relative entropy, the complementarity relation for wave-particle-mixedness in Eq. (\ref{eq lower 1}) reduces to the equality
\begin{equation}
     \mathcal{W}(\rho) +\mathcal{P}(\rho)+ \mathcal{M}(\rho)=\log{d},
\end{equation}
where $\mathcal{W}(\rho)=S(\rho||\rho_{\text{diag}})$ is the waveness, $\mathcal{P}(\rho)=S(\rho_{\text{diag}}||\frac{I}{d})$ is the particleness, and $\mathcal{M}(\rho)=\log{d}-S(\rho||\frac{I}{d})$ is the mixedness, with $S(\rho||\sigma) = \mathrm{tr}(\rho \log \rho) - \mathrm{tr}(\rho \log \sigma)$.

Next, we adopt the linear entropy as a measure of mixedness, that is, $M_1(\rho)=1-\text{Tr}{\rho}^{2}$. By transferring the purity term ${\rm Tr}({\rho}^2)$ to the left hand side of the inequalities in Theorem \ref{Eq. thm-wp} and \ref{thm-rank}, we derive two wave-particle-mixedness triality relations.

\begin{proposition}(Triality 1.)  \label{th d wpe}
     For any $d$-dimensional quantum state $\rho$, we have the wave, particle and mixedness satisfy
  \begin{equation}  \label{eq wpe}
    \mathcal{W_{\rm{tr}}}(\rho)^{2}+\mathcal{P_{\rm{tr}}}(\rho)^{2}+\frac{{r}d^2}{4(d-1)^2} M_1(\rho) \leq \frac{rd}{4(d-1)},
  \end{equation}
 where $r=\max\{r_1,\ r_2\}$, $r_1=r(\rho-\rho_{\rm{diag}})$, and  $r_2= r(\rho_{\rm{diag}}-\frac{I}{d})$.
\end{proposition}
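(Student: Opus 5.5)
The plan is to obtain Eq. (\ref{eq wpe}) directly from Upper bound 1 (Theorem \ref{Eq. thm-wp}) by rewriting the purity in terms of the linear entropy $M_1(\rho)=1-{\rm Tr}(\rho^2)$. No new estimate is needed; the content is already in Theorem \ref{Eq. thm-wp}.

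Start from
\begin{equation*}
\mathcal{W_{\rm{tr}}}(\rho)^{2}+\mathcal{P_{\rm{tr}}}(\rho)^{2}\leq \frac{rd^{2}}{4(d-1)^{2}}\Big({\rm Tr}(\rho^{2})-\frac{1}{d}\Big),
\end{equation*}
with the same $r=\max\{r_1,r_2\}$. Substituting ${\rm Tr}(\rho^2)=1-M_1(\rho)$ turns the right-hand side into
\begin{equation*}
\frac{rd^{2}}{4(d-1)^{2}}\Big(1-\frac{1}{d}\Big)-\frac{rd^{2}}{4(d-1)^{2}}M_1(\rho).
\end{equation*}
Moving the mixedness term to the left-hand side then gives exactly the stated form.

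The only step requiring care is simplifying the constant:
\begin{equation*}
\frac{rd^{2}}{4(d-1)^{2}}\cdot\frac{d-1}{d}=\frac{rd}{4(d-1)}.
\end{equation*}
This matches the right-hand side of Eq. (\ref{eq wpe}). The mixedness coefficient $\frac{rd^2}{4(d-1)^2}$ is nonnegative, so moving the term across preserves the direction of the inequality.

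There is no genuine obstacle, since the argument is a rearrangement of Theorem \ref{Eq. thm-wp}, which already rests on the norm bound $\|A\|_{\rm tr}\le\sqrt{r(A)}\|A\|_{\rm HS}$ and on the Pythagorean identity $\|\rho-\rho_{\rm diag}\|_{\rm HS}^2+\|\rho_{\rm diag}-I/d\|_{\rm HS}^2={\rm Tr}(\rho^2)-1/d$. I would add two remarks. First, the trivial case $r=0$ (that is, $\rho=I/d$) is consistent, since both sides vanish. Second, for $d=2$ with $r=2$ the inequality becomes the equality of Eq. (\ref{Eq. cor}), so the triality is tight in the two-path case.
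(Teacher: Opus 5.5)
Your proposal is correct and follows the paper's own route: the paper obtains Triality 1 by substituting ${\rm Tr}(\rho^2)=1-M_1(\rho)$ into Upper bound 1 and moving the mixedness term to the left-hand side, where the constant reduces to $\frac{rd^2}{4(d-1)^2}\cdot\frac{d-1}{d}=\frac{rd}{4(d-1)}$, just as you computed. One minor point: the nonnegativity of the coefficient is not needed, because adding the same quantity to both sides of an inequality never changes its direction.
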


\begin{proposition}(Triality 2.)
   For any $d$-dimensional quantum state $\rho$, we also have the wave, particle and mixedness satisfy
   \begin{equation}
      \mathcal{W_{\rm{tr}}}(\rho)^{2}+\mathcal{P_{\rm{tr}}}(\rho)^{2}+ \frac{R^{'}d^{2}}{(d-1)^{2}} M_1(\rho) \leq \frac{R^{'}d}{d-1},  
   \end{equation}
      where $R^{'}=\max\{R_{1},R_{2}\}$ with  $R_{1}=\frac{r(\rho)r(\rho_{\rm{diag}})}{r(\rho)+r(\rho_{\rm{diag}})}$ and $R_{2}=\frac{r(\rho_{\rm{diag}})d}{r(\rho_{\rm{diag}})+d}$.
\end{proposition}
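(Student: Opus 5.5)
The plan is to derive Triality 2 directly from Upper bound 2 (Theorem \ref{thm-rank}) by moving the purity term to the left-hand side. That theorem already gives, for every $d$-dimensional state $\rho$,
\begin{equation*}
\mathcal{W_{\rm{tr}}}(\rho)^{2}+\mathcal{P_{\rm{tr}}}(\rho)^{2}\leq \frac{R^{'}d^{2}}{(d-1)^{2}}\Big({\rm{Tr}}(\rho^{2})-\frac{1}{d}\Big),
\end{equation*}
with exactly the same $R^{'}=\max\{R_1,R_2\}$ as in the proposition.

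The first step is to substitute ${\rm Tr}(\rho^2)=1-M_1(\rho)$, using the definition of the linear-entropy mixedness. The right-hand side then becomes
\begin{equation*}
\frac{R^{'}d^{2}}{(d-1)^{2}}\Big(1-\frac{1}{d}\Big)-\frac{R^{'}d^{2}}{(d-1)^{2}}M_1(\rho).
\end{equation*}

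The second step is the simplification $\frac{d^{2}}{(d-1)^{2}}\cdot\frac{d-1}{d}=\frac{d}{d-1}$, so the constant term equals $\frac{R^{'}d}{d-1}$.

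The third step is to move the $M_1$ term to the left-hand side, which yields exactly
\begin{equation*}
\mathcal{W_{\rm{tr}}}(\rho)^{2}+\mathcal{P_{\rm{tr}}}(\rho)^{2}+\frac{R^{'}d^{2}}{(d-1)^{2}}M_1(\rho)\leq\frac{R^{'}d}{d-1}.
\end{equation*}
This is the same manipulation that produces Triality 1 (Proposition \ref{th d wpe}) from Theorem \ref{Eq. thm-wp}.

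There is no real obstacle here. The only thing to check is that the coefficient $R^{'}$ is state-dependent but identical on both sides, so the rearrangement is legitimate pointwise for each $\rho$. No sign issues arise, because $R^{'}\ge 0$ and we only add the same quantity to both sides.
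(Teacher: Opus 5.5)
Your proposal is correct and matches the paper's own derivation. The paper obtains Triality 2 by substituting ${\rm Tr}(\rho^2)=1-M_1(\rho)$ into Upper bound 2 (Theorem \ref{thm-rank}) and moving the mixedness term to the left-hand side, and your simplification $\frac{d^{2}}{(d-1)^{2}}\cdot\frac{d-1}{d}=\frac{d}{d-1}$ is right.
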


Now we utilize  von Neumann entropy  as mixedness, that is, $M_2(\rho)=S(\rho )$.
Similarly, by Theorem \ref{wp-S}, we can derive the third wave-particle-mixedness triality.
\begin{proposition}(Triality 3.)
   For any $d$-dimensional quantum state $\rho$, we have the wave, particle and mixedenss satisfy
   \begin{equation}
        \mathcal{W_{\rm{tr}}}(\rho)^{2}+\mathcal{P_{\rm{tr}}}(\rho)^{2}+ \frac{\ln(2)d^{2}}{2(d-1)^{2}}M_2(\rho)\leq \frac{\ln(2)d^{2}}{2(d-1)^{2}}\log{d}.
   \end{equation}
\end{proposition}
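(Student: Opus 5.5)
This follows directly from Theorem \ref{wp-S}. That theorem gives
\[
\mathcal{W_{\rm{tr}}}(\rho)^{2}+\mathcal{P_{\rm{tr}}}(\rho)^{2}\leq \frac{\ln(2)d^{2}}{2(d-1)^{2}}(\log{d}-S(\rho)).
\]
Substituting $M_2(\rho)=S(\rho)$ and moving the term $\frac{\ln(2)d^{2}}{2(d-1)^{2}}S(\rho)$ to the left-hand side yields the claimed inequality. Since the coefficient is a fixed positive constant depending only on $d$, the rearrangement introduces no case distinctions.

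For completeness, I would also recheck the identity used inside Theorem \ref{wp-S}:
\[
S(\rho\|\rho_{\rm diag})+S(\rho_{\rm diag}\|\tfrac{I}{d})=\log d-S(\rho).
\]
This holds because $S(\rho\|\rho_{\rm diag})=S(\rho_{\rm diag})-S(\rho)$, using ${\rm Tr}(\rho\log\rho_{\rm diag})={\rm Tr}(\rho_{\rm diag}\log\rho_{\rm diag})$ since $\log\rho_{\rm diag}$ is diagonal. It also uses $S(\rho_{\rm diag}\|I/d)=\log d-S(\rho_{\rm diag})$.

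The second ingredient is the quantum Pinsker inequality, applied to each of the two pairs $(\rho,\rho_{\rm diag})$ and $(\rho_{\rm diag},I/d)$. Its constant convention must be consistent with base-2 logarithms, which the factor $\ln 2$ accounts for.

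No step is a genuine obstacle. The only point needing care is consistency of logarithm bases: the factor $\ln 2$ pairs with $\log$ taken in base 2 for $S(\rho)$ and $\log d$. I would also note that $S(\rho)$ may be infinite-free here, since finite dimension gives $0\le S(\rho)\le\log d$, so the right-hand side is finite.
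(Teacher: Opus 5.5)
Your proposal is correct and matches the paper's argument: Triality 3 is obtained by substituting $M_2(\rho)=S(\rho)$ into Theorem \ref{wp-S} and moving the term $\frac{\ln(2)d^{2}}{2(d-1)^{2}}S(\rho)$ to the left-hand side. Your extra checks are accurate but go beyond what is needed, since Theorem \ref{wp-S} already supplies them. These are the identity $S(\rho\|\rho_{\rm diag})+S(\rho_{\rm diag}\|\tfrac{I}{d})=\log d-S(\rho)$ and the base-2 Pinsker constant.
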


The three wave-particle-mixedness trialities are derived 
directly from the corresponding wave-particle dualities respectively.  In the following, we will present a complete complementarity relation. Frist we define
\begin{equation}
    M(\rho)=\frac{rd}{4(d-1)}-\frac{d^{2}}{4(d-1)^{2}}||\rho-\rho_{\rm{diag}}||_{\rm{tr}}^{2}-\frac{d^{2}}{4(d-1)^{2}}||\rho_{\rm{diag}}-\frac{I}{d}||_{\rm{tr}}^{2}.
\end{equation}
By the convexity of the trace norm, one can verify that 
$M(\rho)$ fulfills  the following properties:
\begin{enumerate}
    \item[(1)]$M(\rho)$ reaches its global minimum if the state $\rho$ is a pure state.
    \item[(2)] $M(\rho)$ reaches its global maximum if the state $\rho$
is a maximum mixed state.
    \item[(3)] $M(\rho)$ is concave.
\end{enumerate}
Note that $M(\rho)$ exhibits some properties of mixedness.  Based on $M(\rho)$, a generalized triality relation is given below. 
\begin{proposition}(Triality 4.)
     For any quantum state $\rho$, we have the triality relation
     \begin{equation}
         \mathcal{W_{\rm{tr}}}(\rho)^{2}+\mathcal{P_{\rm{tr}}}(\rho)^{2}+M(\rho)= \frac{rd}{4(d-1)},
     \end{equation}
where $r=\max\{r_1,\ r_2\}$, $r_1=r(\rho-\rho_{\rm{diag}})$, and  $r_2= r(\rho_{\rm{diag}}-\frac{I}{d})$.
\end{proposition}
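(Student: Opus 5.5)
The identity in Triality 4 is essentially definitional, so the plan is to substitute the definition of $M(\rho)$ and the normalized measures and check that everything cancels. I would first recall that $\mathcal{W_{\rm{tr}}}(\rho)=\frac{d}{2(d-1)}||\rho-\rho_{\rm{diag}}||_{\rm{tr}}$ and $\mathcal{P_{\rm{tr}}}(\rho)=\frac{d}{2(d-1)}||\rho_{\rm{diag}}-\frac{I}{d}||_{\rm{tr}}$. Squaring gives
\begin{equation*}
\mathcal{W_{\rm{tr}}}(\rho)^{2}+\mathcal{P_{\rm{tr}}}(\rho)^{2}=\frac{d^{2}}{4(d-1)^{2}}\Big(||\rho-\rho_{\rm{diag}}||_{\rm{tr}}^{2}+||\rho_{\rm{diag}}-\tfrac{I}{d}||_{\rm{tr}}^{2}\Big).
\end{equation*}
These are exactly the two terms subtracted in the definition of $M(\rho)$. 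Adding $M(\rho)$ therefore leaves $\frac{rd}{4(d-1)}$, where $r=\max\{r_1,r_2\}$ is the same state-dependent rank parameter used in defining $M(\rho)$. The only point that needs care is that $r$ appears identically on both sides.

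To make the statement meaningful, and not a bare tautology, I would then check that $M(\rho)\geq 0$, so that $M$ really behaves as a mixedness term. This follows from the Hilbert–Schmidt decomposition used in Theorem \ref{Eq. thm-wp}. There one has
\begin{equation*}
\mathcal{W_{\rm{tr}}}^{2}+\mathcal{P_{\rm{tr}}}^{2}\leq \frac{rd^{2}}{4(d-1)^{2}}\big({\rm Tr}\rho^{2}-\tfrac1d\big)\leq \frac{rd^{2}}{4(d-1)^{2}}\big(1-\tfrac1d\big)=\frac{rd}{4(d-1)},
\end{equation*}
and this is precisely Proposition \ref{th d wpe} with $M_1\geq0$. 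The orthogonality identity behind it is $||\rho-\rho_{\rm diag}||_{\rm HS}^2+||\rho_{\rm diag}-I/d||_{\rm HS}^2={\rm Tr}\rho^2-1/d$, which holds because $\rho-\rho_{\rm diag}$ is off-diagonal while $\rho_{\rm diag}-I/d$ is diagonal.

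I do not expect a genuine obstacle in proving the identity itself. The hardest part is the interpretive side: the listed properties (1)–(3) of $M$ concern extremality and concavity, yet $r$ varies with $\rho$. Concavity of $M$ would therefore need the caveat that $r$ is held fixed, in which case $M$ is a constant minus a sum of squares of convex nonnegative functions, and that sum is convex. The identity itself requires no such argument.
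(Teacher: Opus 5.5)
Your proof is correct and matches the paper: there the identity is immediate once the definition of $M(\rho)$ is substituted, with the same state-dependent $r$ on both sides, and the paper gives no further argument. Your extra remarks are sound but not needed for the identity itself. $M(\rho)\geq 0$ does follow from Upper bound 1 together with ${\rm Tr}(\rho^2)\leq 1$. Your caveat about $r$ is also well taken: with the state-dependent $r$ one gets $r=0$ and hence $M(I/d)=0$ at the maximally mixed state, so the listed extremality and concavity properties of $M$ only hold when $r$ is held fixed.
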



\section{Conclusions}\label{sec IV}

In summary, we have developed a  method to quantify
both wave and particle behaviors in a multi-path interferometer. In particular, we find that the trace distance is a good candidate for wave and particle measures. As a result, some complementarity relations including the wave-particle duality, wave-particle-mixedness triality, and generalized triality relation are established respectively.

As we know, the trace distance is not a well-defined coherence measure, so this work provides  the evidence for the difference between the waveness and coherence further. Since the difference between waveness and coherence is still unclear, so it is challenging to explore the nature of the waveness. Additionally, the restrictions of waveness and particleness is also an interesting topic. For example, for the wave and particle measure in terms of the trace distance, we have proved $ \mathcal{W_{\rm{tr}}}(\rho)^{2}+\mathcal{P_{\rm{tr}}}(\rho)^{2}\leq 1$ for qubit and qutrit systems and find the violation for $d=6$. But we don't know whether the inequality is true for $d=4$ or $d=5$. 


\section{Acknowledgment}
{We thank the anonymous referees for the valuable comments. We are also grateful to Teng Ma and Xiongfeng Ma for helpful discussions. M. J. Zhao thanks the center for Quantum Information,
Institute for Interdisciplinary Information Sciences,
Tsinghua University for hospitality. This work is partially supported by the National Natural Science Foundation of China (Nos. 11671201 and 12171044) and Postgraduate Research \& Practice Innovation Program of Jiangsu Province (KYCX25\_0627).}

\section{Appendix}





\appendixsection{The proof of Eq. (\ref{Eq. equ}) and Eq. (\ref{Eq. upper})}\label{app:C}

First, for any pure state $\rho=|\psi\rangle\langle\psi|$ and $|\psi\rangle = \sum_{i=1}^{d} c_i |i\rangle$ with $\sum |c_i|^2 = 1$, we will  prove  all positive eigenvalues of matrix $\rho-\rho_{\rm{diag}}$ satisfy 
\begin{equation}
    \sum_{i=1}^{d} \frac{|c_i|^2}{\lambda + |c_i|^2}=1.
\end{equation}

\begin{proof}
Let $\lambda$ be the positive eigenvalue of $A=|\psi\rangle\langle\psi|-D$ with $D = \rho_{\text{diag}} = {\rm diag}(|c_1|^2, \dots, |c_d|^2))$ and $|v\rangle$ is the corresponding eigenvector, that is $A|v\rangle = \lambda|v\rangle$, i.e.
\begin{equation}\label{Eq. e1}
  (|\psi\rangle\langle\psi| - D)|v\rangle = \lambda|v\rangle. 
\end{equation}
Set $\langle\psi|v\rangle = \alpha$, the Eq. (\ref{Eq. e1}) becomes
\begin{equation}\label{Eq. e2}
    \alpha|\psi\rangle - D|v\rangle = \lambda|v\rangle,
\end{equation}
which gives $ (D + \lambda I)|v\rangle = \alpha|\psi\rangle$. Since $\lambda > 0$,  $(D + \lambda I)$ is invertible, then
\begin{equation}\label{eq app-1}
   |v\rangle = \alpha (D + \lambda I)^{-1}|\psi\rangle, 
\end{equation}
where $(D + \lambda I)^{-1}$ is a diagonal matrix with diagonal elements $1/(|c_i|^2 + \lambda)$.
Since $\langle\psi|v\rangle = \alpha$, substitute into Eq. (\ref{eq app-1})
\begin{equation}\label{eq app-2}
    \langle\psi|\alpha (D + \lambda I)^{-1}|\psi\rangle = \alpha,
\end{equation}
Note that $\alpha \neq 0$. Or else, combining Eq. (\ref{Eq. e1}) and Eq. (\ref{Eq. e2}), we have $A|v\rangle=-D|v\rangle=\lambda|v\rangle$, which implies $-\lambda$ is the eigenvalue of $D$. This contradicts the nonnegative eigenvalue $|c_{i}|^{2}$ of $D$. 
Then, dividing both sides of Eq. (\ref{eq app-2}) by $\alpha$ gives
\begin{equation}\label{Eq. e3}
    \langle\psi|(D + \lambda I)^{-1}|\psi\rangle = 1.
\end{equation}
Computing the left-hand side of Eq. (\ref{Eq. e3}), we get
\begin{equation*}
   \langle\psi|(D + \lambda I)^{-1}|\psi\rangle 
   = \sum_{i=1}^{d} \frac{|c_i|^2}{|c_i|^2 + \lambda}. 
\end{equation*}
Thus, the any positive eigenvalue $\lambda$ of $A$ satisfies
\begin{equation*}
    \sum_{i=1}^{d} \frac{|c_i|^2}{|c_i|^2 + \lambda} = 1.
\end{equation*}
\end{proof}
Then, for any $\lambda > 0$,~$0\leq |c_{i}|^{2}\leq 1$ and $\sum_{i=1}^{d}|c_{i}|^{2}=1$, we will  prove 
 \begin{equation*}
    \sum_{i=1}^{d} \frac{|c_i|^2}{|c_i|^2 + \lambda}\leq \frac{1}{\lambda+1/d} 
 \end{equation*}
   and the equality holds if and only if $|c_{i}|^{2}=1/d$ for all $i$.  

   \begin{proof}
   Let $x_i = |c_i|^2$, then $x_i\in [0,1]$ and $\sum_{i=1}^d x_i = 1$. For $x\in[0,1]$, the function $g(x) = \frac{x}{x + \lambda}$ is strictly concave because its second derivative is
$g''(x) = -\frac{2\lambda}{(x + \lambda)^3} < 0$.
  Since  $g(x)$ is a concave function, for any weights  $0\leq p_i\leq 1$  satisfying  $\sum_{i=1}^{d} p_i = 1$, apply Jensen’s inequality, we have  
\begin{align*}
\sum_{i=1}^{d}p_{i}g(x_{i})\leq g(\sum_{i=1}^{d}p_{i}x_{i}).
\end{align*}
Set $p_{i}=\frac{1}{d}$, under the constraint  $\sum_{i=1}^{d} x_i = 1$, then it has
\begin{equation*}
    \frac{1}{d}\sum_{i=1}^{d}g(x_{i})\leq 
    g(\frac{1}{d}).
\end{equation*}
Therefore, 
\begin{equation*}
    \sum_{i=1}^{d}g(x_{i})\leq \frac{1}{\lambda+1/d} ,
\end{equation*}
where the equality holds if and only if $x_1=x_2=\cdots=x_d=1/d$.
\end{proof}

\appendixsection{The proof of lemma \ref{Eq. vec}}\label{app:D}

In order to maximize
$S = \sum_{i=1}^{d} \left| x_i - \frac{1}{d} \right|$, we divide the indices into two categories
$I = \{ i \mid x_i \geq \frac{1}{d} \}$ and 
$J = \{ i \mid x_i < \frac{1}{d} \}$.
Then we have $ S = \sum_{i \in I} \left( x_i - \frac{1}{d} \right) + \sum_{i \in J} \left( \frac{1}{d} - x_i \right).$
Let $|I| = k$, then $|J| = d - k$, this yields to
\begin{eqnarray*}
    S &=& \left( \sum_{i \in I} x_i - \frac{k}{d} \right) + \left( \frac{d - k}{d} - \sum_{i \in J} x_i \right)\\
    &=& 2 \sum_{i \in I} x_i - 1 + \frac{d - 2k}{d},\\
    &\leq &  \frac{2(d-1)}{d},
\end{eqnarray*}
employing
$\sum_{i \in I} x_i + \sum_{i \in J} x_i = 1$.
The maximum is attained when
$k = 1$, where $x_i = 1$ for some $i$.
 
\appendixsection{The proof of Theorem \ref{pro-three}}\label{app:E}

For any quantum state $\rho$, $\mathcal{W_{\rm{tr}}}(\rho)$ and $\mathcal{P_{\rm{tr}}}(\rho)$ are both convex, so we only need to prove Theorem \ref{pro-three} for pure state. Next, we consider  quantum state $\rho$ as any pure state.

First, we suppose $\rho_{\text{diag}} = \mathrm{diag}(p_1, p_2, p_3)$ with diagonal entriess $p_i$ of $\rho$ in nonincreasing order,  $p_i \geq 0$ for all $i$, $\sum_i p_i = 1$. 
Then $\mathcal{P_{\rm{tr}}}(\rho)^{2}$ is given by
\[
\mathcal{P_{\rm{tr}}}(\rho)^{2} = \frac{9}{16} \left\| \rho_{\text{diag}} - \frac{I}{3} \right\|_{\mathrm{tr}}^{2} = 
\begin{cases}
\dfrac{9}{4}\left(p_1 - \dfrac{1}{3}\right)^2, & \text{if } p_2 {\leq}\dfrac{1}{3}, \\[8pt]
\dfrac{9}{4}\left(p_3 - \dfrac{1}{3}\right)^2, & \text{if } p_2 > \dfrac{1}{3}.
\end{cases}
\]

Second, we suppose $\lambda_{1}, \lambda_{2}, \lambda_{3}$ are the eigenvalues of $\rho - \rho_{\text{diag}}$  in non-increasing order. It follows from Lemma \ref{Eq. lem-im} that  
\[
\mathcal{W_{\rm{tr}}}(\rho)^{2} = \frac{9}{16} ||\rho - \rho_{\text{diag}}||_{\rm{tr}}^{2} = \frac{9}{4} \lambda_{1}^{2}.
\]
Note that
\begin{eqnarray*}
    {\rm{Tr}}(\rho-\rho_{\text{diag}})^{2}
    &=&{\rm{Tr}}(\rho^{2})-2{\rm{Tr}}(\rho\rho_{\text{diag}})+{\rm{Tr}}(\rho_{\text{diag}}^{2})\\\nonumber
    &=&1-\sum_{i=1}^{3}p_{i}^{2}\\\nonumber
    &=&2S
\end{eqnarray*}
with 
$S=p_{1}p_{2}+p_{2}p_{3}+p_{3}p_{1}.$
Combined with the relation 
\begin{equation*}
    {\rm{Tr}}(\rho-\rho_{\text{diag}})^{2}= \sum_{i}\lambda_{i}^{2},
\end{equation*}
we derive 
\begin{equation*}
    2S=\sum_{i=1}^{3}\lambda_{i}^{2}\geq  \frac{3}{2}\lambda_{1}^{2}
\end{equation*}
by Cauchy-Schwarz inequality $ \lambda_{2}^{2}+ \lambda_{3}^{2}\geq \frac{( \lambda_{2}+ \lambda_{3})^{2}}{2}=\frac{ \lambda_{1}^{2}}{2}$, which implies an upper bound for $\mathcal{W_{\rm{tr}}}(\rho)^{2}$ as
\begin{equation*}
    \mathcal{W_{\rm{tr}}}(\rho)^{2} \leq {3S}.
\end{equation*}

Next, we analyze the maximum of  $\mathcal{W_{\rm{tr}}}(\rho)^{2} + \mathcal{P_{\rm{tr}}}(\rho)^{2}$. 
If $p_{2} \leq \frac{1}{3}$,
by the analysis above, we get
\begin{equation*}
    \mathcal{W_{\rm{tr}}}(\rho)^{2} + \mathcal{P_{\rm{tr}}}(\rho)^{2} 
\leq \frac{9}{4} \left( \frac{4}{3}S + \left(p_1 - \frac{1}{3}\right)^{2} \right).
\end{equation*}
{Using the Lagrange multiplier method, we find that the extremum of the function $f(p_1, p_2, p_3)=\frac{4}{3}S+ (p_{1}-\frac{1}{3})^{2}$ under the constraints $\sum_{i=1}^{3}p_{i}=1$  and  $p_{i}\geq 0$ for $i=1,2,3$ is $\frac{4}{9}$, which is attained at the extremum points $(p_1,p_2,p_3)=\{(1-2t,t,t)\mid t\in[0,\frac{1}{3}]\}$}.
Therefore we get the wave-particle duality 
\begin{equation*}
    \mathcal{W_{\rm{tr}}}(\rho)^{2} + \mathcal{P_{\rm{tr}}}(\rho)^{2} \leq1,
\end{equation*}
which is attained if and only if 
$\lambda_{2} = \lambda_{3}$, and  $(p_{1}, p_{2}, p_{3}) = (1-2t, t, t)$ with $t \in [0, \frac{1}{3}]$. 
The analysis is also true for the case  $p_{2} > \frac{1}{3}$.

Now we search for the quantum state satisfying 
the equality $\mathcal{W_{\rm{tr}}}(\rho)^{2} + \mathcal{P_{\rm{tr}}}(\rho)^{2} = 1$. 
For any $3 \times 3$ matrix $\rho-\rho_{\text{diag}}$, its characteristic equation is given by  
\begin{equation}\label{eq app 3-1}
x^3 - c_1 x^2 + c_2 x - c_3 = 0,
\end{equation}
where $c_1, c_2, c_3$ are the elementary symmetric polynomials in the eigenvalues $\lambda_1, \lambda_2, \lambda_3$:  
\begin{eqnarray*}
c_1 &&= \lambda_1 + \lambda_2 + \lambda_3, \\
c_2 &&= \lambda_1\lambda_2 + \lambda_2\lambda_3 + \lambda_3\lambda_1, \\
c_3 &&= \lambda_1\lambda_2\lambda_3.
\end{eqnarray*}
Employing the relations 
\begin{eqnarray*}
\text{Tr}(\rho-\rho_{\text{diag}}) &&= \sum_{i=1}^3 \lambda_i=0,\\
 \text{Tr}(\rho-\rho_{\text{diag}})^2 &&= 1 - \sum_i p_i^2=  \sum_{i=1}^3 \lambda_i^2 ,\\
\text{Tr}(\rho-\rho_{\text{diag}})^3 &&=1 - 3 \sum_i p_i^2 + 2 \sum_i p_i^3= \sum_{i=1}^3 \lambda_i^3,
\end{eqnarray*}
we obtain the coefficients $c_i$ can be expressed by $p_i$ ($i=1,2,3$) as 
\begin{eqnarray*}
c_1 &&=0,\\
c_2 &&= -\frac{1}{2}( 1 - \sum_i p_i^2),\\
 c_3 &&=\frac{1}{3} (1 - 3 \sum_i p_i^2 + 2 \sum_i p_i^3).
\end{eqnarray*}
 Substituting $c_i$ into 
the characteristic equation Eq. (\ref{eq app 3-1}) as well as $p_1 = 1 - 2t, \, p_2 = t, \, p_3 = t $,
we get a cubic equation of $t$,
\begin{equation}\label{eq app 3}
x^{3} - (2t - 3t^{2})x - 2t^{2}(1 - 2t) = 0.
\end{equation}
The condition $\lambda_2=\lambda_3$ is equivalent to  the cubic equation has a double root. This demands the coefficients in Eq. (\ref{eq app 3}) satisfy $t = 0$ or $t = \frac{1}{3}$. Therefore, we derive that the quantum states satisfying the 
equality $\mathcal{W_{\rm{tr}}}(\rho)^{2} + \mathcal{P_{\rm{tr}}}(\rho)^{2} = 1$ are {$\rho = |\psi\rangle\langle\psi|$ with $|\psi\rangle = \frac{1}{\sqrt{3}}(e^{i\theta_{1}},e^{i\theta_{2}},e^{i\theta_{3}})^{\top}$, or $\rho$ is basis-state, $\rho = |i\rangle\langle i|$.

\end{document}